\title{A Robust Wald-type Test for Testing the Equality of Two Means from Log-Normal Samples}
\titlerunning{A Wald-type Test for the Equality of Two Means}
  \author{Ayanendranath Basu \and
    Abhijit Mandal \and
    Nirian Mart\'{\i}n \and
    Leandro Pardo}
 \institute{A. Basu \at
    Interdisciplinary Statistical Research Unit, Indian Statistical Institute,  India \and
    A. Mandal \at
    Department of Mathematics, Wayne State University,   USA\\
    Email: abhijit.mandal@wayne.edu\and
    N. Mart\'{\i}n \at
    Department of Statistics, Carlos III University of Madrid,  Spain \and
        L. Pardo\at
    Department of Statistics and O.R., Complutense University of Madrid, Spain
 }
 \authorrunning{Basu, Mandal, Mart\'{\i}n and Pardo}
 \date{}
\begin{document}

\maketitle

\begin{abstract}
The log-normal distribution is one of the most common distributions used for modeling skewed and positive data. It frequently arises in many disciplines of science, specially in the biological and medical sciences. 
The statistical analysis for comparing the means of two independent log-normal
distributions is an issue of significant interest.   In this paper we present a
robust test for this problem. The unknown parameters of the model are estimated
by minimum density power divergence estimators (\citealp{MR1665873},
\textit{Biometrika}, 85(3), 549--559). The robustness as well as the asymptotic properties of the proposed test statistics are rigorously established. The performance of the test is explored
through simulations and real data analysis. The test is compared with some existing methods, and it is demonstrated that the proposed test outperforms the others  in the presence of outliers.

\keywords{Robustness \and Minimum
Density Power Divergence Estimator \and Wald-type test statistics \and Log-normal distribution.}
\end{abstract}




\section{Introduction\label{sec1}}

The normal distribution is the most common model used to describe the random variation in real data in many scientific disciplines; the well-known
bell-shaped curve can easily be characterized and described by two parameters: the
mean and the standard deviation. However, often random measurements exhibit skewed patterns which may not be appropriately modeled by the symmetric bell shaped curve.  Skewed distributions are particularly common when
mean values are high, variances large, and values cannot be negative, as is
the case, for example, with species abundance, lengths of latent periods of
infectious diseases, survival time of cancer patients, biological oncology markers sensitivity to fungicides in populations, appearance of lung cancer in cigarette smokers, rainfalls in meteorology, sizes of incomes in economics, etc. Such skewed
distributions are often closely described by the class of  log-normal models, and we are often interested in testing the equality of
the means of two distributions under the log-normal model. In medicine, for example, one may be interested in comparing the latent periods of two infectious diseases, or hospital charges for two groups of patients; in biology this may involve the comparison of the abundance of fish and plankton populations; in environmental science, the distributions of particles, chemicals and organisms in the environment may have to be compared; in linguistics such comparisons may involve the  number of letters per word  and the  number of words per sentence; and in economics one could compare the age of marriage, farm size and income.


For testing the equality of means of two log-normal distributions it is very common to consider a log-transformation of the
data,  apply the $t$-test or the Wilcoxon test to these data,  and to
report the resulting $p$-values for the null hypothesis based on the original
data. This procedure is not convenient, in particular, if the variances
of the two populations are different because in this case testing the equality
of the two means of the log-transformed data are not equivalent to testing the
means of the original data. In this context \cite{MR0038612}, \cite{MR0130756}, \cite{MR0166871}, \cite{MR0395032} and \cite{zhou1997methods} pointed out that the Wilcoxon test as well as the $t$-test can have type I error rates which are very different from the nominal levels when variances of the two populations are not equal. \cite{MR1186261} demonstrated that the Box-Cox transformed two-sample $t$-test based on log-data is asymptotically more powerful than the ordinary $t$-test. \cite{MR1204372} further developed this test to get an improved asymptotic power as well as a better small sample performance.

\cite{zhou1997methods} presented a large sample test and showed, using a
simulation study, that the likelihood-based approach is the best in terms of
the type I error rate and power. \cite{krishnamoorthy2003inferences} demonstrated  that
the distribution of the $Z$ statistic given by \cite{zhou1997methods} is skewed when the
sample sizes are small, or when the parameters are not close to each other.
\cite{gupta2006statistical} considered a score test and compared it with the test
considered by \cite{zhou1997methods}. \cite{MR3334548} considered a test based on the computational approach introduced by \cite{MR2393699}. \cite{ahmed2001test} developed a large sample test, \cite{guo2000testing} proposed three approximation methods, and \cite{MR2750888} considered a
generalized $p$-value approach for comparing the means of several log-normal populations.

In this paper we have developed a class of robust Wald-type test
statistics for testing the equality of the means associated with two
log-normal populations when the parameters are estimated using the minimum
density power divergence (MDPD) estimator. The MDPD estimator was introduced by \cite{MR1665873}. Wald-type test
statistics based on MDPD estimators have been
proposed and studied in \cite{MR3435166} and \cite{ghosh2016influence}. In those
papers the robustness of the Wald-type test statistics based on the MDPD estimator
was established from a theoretical as well as from an applied point of view.

Let $X$ and $Y$ be independent random variables whose distributions are
modeled as log-normals, i.e.,  both $\log X$ and $\log Y$ are normally distributed. Formally, 
\[
\log X\sim\mathcal{N(\mu}_{1},\sigma_{1}^{2})\text{ and }\log Y\sim
\mathcal{N(\mu}_{2},\sigma_{2}^{2})\text{.}%
\]
It is well-known that
\[
E\left[  X\right]  =\exp\left(  \mathcal{\mu}_{1}+\frac{\sigma_{1}^{2}}%
{2}\right)  \text{ and }E\left[  Y\right]  =\exp\left(  \mathcal{\mu}%
_{2}+\frac{\sigma_{2}^{2}}{2}\right)  \text{.}%
\]
It is clear that when $\sigma_{1}^{2}=\sigma_{2}^{2}$,  testing
\begin{equation}
H_{0}:E\left[  X\right]  =E\left[  Y\right]  \text{ against }H_{1}:E\left[
X\right]  \neq E\left[  Y\right]  \text{ } \label{1}%
\end{equation}
is equivalent to testing
\begin{equation}
H_{0}:\mu_{1}=\mu_{2}\text{ against }H_{1}:\mu_{1}\neq\mu_{2}. \label{2}%
\end{equation}
But if $\sigma_{1}^{2}\neq\sigma_{2}^{2}$, the problem of testing for the
hypotheses in (\ref{1}) cannot be solved by testing the hypotheses in
(\ref{2}). For more details see \cite{zhou1997methods}.

In Section \ref{sec2} we introduce some notation in relation to the MDPD
estimator and we present the asymptotic distribution of the MDPD estimator for
$\mu_{1}$ and $\sigma_{1}$ in a log-normal model as well as the influence function associated with the MDPD estimator. The Wald-type test
statistics are introduced and their asymptotic distributions are studied in
Section \ref{sec3}. The robustness properties of the Wald-type test statistics
considered in this paper are studied in Section \ref{sec4}. In Section \ref{sec6} a simulation study is developed, and   some numerical examples are considered in Section \ref{sec5}. Finally, some conclusions are presented in Section \ref{conclusion}.

\section{The minimum density power divergence estimator: asymptotic
properties\label{sec2}}

For any two probability density functions $f$ and $g$, the density power
divergence measure is defined, as the function of a single tuning parameter
$\beta\geq0$, as
\begin{equation}
d_{\beta}(g,f)=\left\{
\begin{array}
[c]{ll}%
\int\left\{  f^{1+\beta}(z)-\left(  1+\frac{1}{\beta}\right)  f^{\beta
}(z)g(z)+\frac{1}{\beta}g^{1+\beta}(z)\right\}  dz, & \text{for}%
\mathrm{~}\beta>0,\\[2ex]%
\int g(z)\log\left(  \frac{g(z)}{f(z)}\right)  dz, & \text{for}\mathrm{~}%
\beta=0.
\end{array}
\right.  \label{3}%
\end{equation}
Let $Z$ be distributed with density function $f_{\boldsymbol{\theta}}(z)$ with
respect to some $\sigma$-finite measure (usually Lebesgue measure or counting
measure) where $\boldsymbol{\theta}\in\Theta\subset\mathbb{R}^{p}$. We are
interested in considering an estimator for $\boldsymbol{\theta}$ based on
(\ref{3}). We shall denote by $G$ the distribution function corresponding to
the density function $g$ that generates the data. The MDPD functional at $G$, denoted by $T_{\beta}\left(  G\right)  ,$ is
defined as
\[
d_{\beta}\left(  g,f_{T_{\beta}\left(  G\right)  }\right)  =\min
_{\boldsymbol{\theta}\in\Theta}d_{\beta}\left(  g,f_{\boldsymbol{\theta}%
}\right)  .
\]
\cite{MR1665873} considered the MDPD estimator of $\boldsymbol{\theta}$ given by
\[
\widehat{\boldsymbol{\theta}}_{\beta}=\boldsymbol{T}_{\beta}\left(
G_{n}\right)  ,
\]
where $G_{n}$ is the empirical distribution function associated with a random
sample $Z_{1},...,Z_{n}$ from the population with density function $g$. It is
easy to see that
\[
\widehat{\boldsymbol{\theta}}_{\beta}=\arg\min_{\boldsymbol{\theta}\in\Theta
}\left(  \int f_{\boldsymbol{\theta}}^{1+\beta}(z)dz-\left(  1+\frac{1}{\beta
}\right)  \frac{1}{n}\sum\limits_{i=1}^{n}f_{\boldsymbol{\theta}}^{\beta
}(Z_{i})\right)
\]
for $\beta>0$, and
\[
\widehat{\boldsymbol{\theta}}_{\beta=0}=\arg\min_{\boldsymbol{\theta}\in
\Theta}\left(  -\frac{1}{n}\sum\limits_{i=1}^{n}\log f_{\boldsymbol{\theta}%
}(Z_{i})\right)
\]
for $\beta=0$. We can see that in the latter case we get the maximum likelihood estimator (MLE) as the solution.

Now, we shall focus on a log-normal population $X$ with
unknown parameter  $\boldsymbol{\theta}=(\mu_{1},\sigma_{1})^{T}$. For
population $Y$ with unknown parameter  $\boldsymbol{\theta}=(\mu
_{2},\sigma_{2})^{T}$, the derivations would be exactly the same. Let
$X_{1},X_{2},\ldots,X_{n_{1}}$ be a random sample of size $n_{1}$ from a
log-normal population $\log X\sim\mathcal{N}(\mu_{1},\sigma_{1}^{2})$, where
both parameters are unknown. The pair $\mu_1^*,\sigma_1^*$ denotes the
true value of $\mu_{1},\sigma_{1}$. Let $f_{\mu_{1},\sigma_{1}}(x)$
represent the density function of a log-normal variable with parameters $\mu_1$ and $\sigma_1$. For a given $\beta$,
we get the MDPD estimators $\widehat{\mu}_{1,\beta}$ and $\widehat{\sigma
}_{1,\beta}$ of $\mu_{1}$ and $\sigma_1$ by minimizing the  function 
\begin{equation}
\int_{\mathbb{R}}f_{\mu_{1},\sigma_{1}}^{1+\beta}(x)dx-\left(  1+\frac
{1}{\beta}\right)  \frac{1}{n_{1}}\sum_{i=1}^{n_{1}}f_{\mu_{1},\sigma_{1}%
}^{\beta}(X_{i}),\text{\qquad for }\beta>0,\label{4}%
\end{equation}
and
\begin{equation}
-\frac{1}{n_{1}}\sum_{i=1}^{n_{1}}\log f_{\mu_{1},\sigma_{1}}(X_{i}%
),\text{\qquad for }\beta=0,\label{5}%
\end{equation}
over $\mu_1$ and $\sigma_1$. 
For $\beta=0$, the objective function in (\ref{5}) is the negative of the
usual log likelihood and has the classical MLE as the
minimizer. For a log-normal density, the function in (\ref{4}) simplifies to $(1+\beta) h_{n_{1},\beta}(\mu_{1},\sigma_{1})$, where
\[
h_{n_{1},\beta}(\mu_{1},\sigma_{1})=\frac{1}{\sigma_{1}^{\beta}(2\pi
)^{\frac{\beta}{2}}}\left\{  \frac{\exp\left(  -\beta\mu_{1}+\frac{\sigma
_{1}^{2}\beta^{2}}{2\left(  1+\beta\right)  }\right)  }{\left(  1+\beta
\right)  ^{3/2}}-\frac{1}{n_{1}\beta}\sum_{i=1}^{n_{1}}\frac{1}{X_{i}^{\beta}%
}\exp\left(  -\frac{1}{2}\left(  \frac{\log X_{i}-\mu_{1}}{{\sigma}_{1}}\right)
^{2}\beta\right)  \right\}  .
\]
In order to get $\widehat{\mu}_{1,\beta}$ and $\widehat{\sigma}_{1,\beta}$, we
have to solve the estimating equation
\begin{equation}
\mathbf{h}_{n_{1},\beta}^{\prime}(\widehat{\mu}_{1,\beta},\widehat{\sigma
}_{1,\beta})=%
\begin{pmatrix}
_{1}h_{n_{1},\beta}^{\prime}(\widehat{\mu}_{1,\beta},\widehat{\sigma}%
_{1,\beta})\\
_{2}h_{n_{1},\beta}^{\prime}(\widehat{\mu}_{1,\beta},\widehat{\sigma}%
_{1,\beta})
\end{pmatrix}
=\boldsymbol{0}_{2},\label{6}%
\end{equation}
where
\begin{equation}
_{1}h_{n_{1},\beta}^{\prime}(\widehat{\mu}_{1,\beta},\widehat{\sigma}%
_{1,\beta})=\left.  \frac{\partial h_{n_{1},\beta}(\mu_{1},\widehat{\sigma
}_{1,\beta})}{\partial\mu_{1}}\right\vert _{\mu_{1}=\widehat{\mu}_{1,\beta}%
},\qquad_{2}h_{n_{1},\beta}^{\prime}(\widehat{\mu}_{1,\beta},\widehat{\sigma
}_{1,\beta})=\left.  \frac{\partial h_{n_{1},\beta}(\widehat{\mu}_{1,\beta
},\sigma)}{\partial\sigma}\right\vert _{\sigma=\widehat{\sigma}_{1,\beta}%
},\label{7}%
\end{equation}
and $\boldsymbol{0}_{2}$ represents a zero vector of length $2$.

Using a Taylor series expansion of the function in Equation (\ref{6}), it is
easy to show that
\begin{align}
\sqrt{n_{1}}%
\begin{pmatrix}
\widehat{\mu}_{1,\beta}-\mu_1^*\\
\widehat{\sigma}_{1,\beta}-\sigma_1^*%
\end{pmatrix}
&  =\sqrt{n_{1}}\mathbf{H}_{n_{1},\beta}^{-1}(\mu_1^*,\sigma_1^*%
)\boldsymbol{h}_{n_{1},\beta}^{\prime}(\mu_1^*,\sigma_1^*)+o_{p}%
(1)\nonumber\\
&  =\sqrt{n_{1}}\mathbf{J}_{\beta}^{-1}(\mu_1^*,\sigma_1^*)\boldsymbol{h}%
_{n_{1},\beta}^{\prime}(\mu_1^*,\sigma_1^*)+o_{p}(1),\label{muSigma}%
\end{align}
where%
\[
\mathbf{H}_{n_{1},\beta}(\mu_1^*,\sigma_1^*)=\left.
\begin{pmatrix}
\frac{\partial^{2}h_{n_{1},\beta}(\mu_{1},\sigma_{1})}{\partial\mu_{1}^{2}} &
\frac{\partial^{2}h_{n_{1},\beta}(\mu_{1},\sigma_{1})}{\partial\mu_{1}%
\partial\sigma_{1}}\\
\frac{\partial^{2}h_{n_{1},\beta}(\mu_{1},\sigma_{1})}{\partial\mu_{1}%
\partial\sigma_{1}} & \frac{\partial^{2}h_{n_{1},\beta}(\mu_{1},\sigma_{1}%
)}{\partial\sigma_{1}^{2}}%
\end{pmatrix}
\right\vert _{\mu_{1}=\mu_1^*,\sigma_{1}=\sigma_1^*},
\]%
\begin{align}
\boldsymbol{J}_{\beta}(\mu_1^*,\sigma_1^*) &  =\lim_{n_{1}\rightarrow
\infty}\mathbf{H}_{n_{1},\beta}(\mu_1^*,\sigma_1^*)\label{9} \nonumber\\
&  =\mathcal{L}\left(  \beta,\mu_1^*,\sigma_1^*\right)  \nonumber\\
& \times \left(
\begin{array}
[c]{cc}%
\frac{1+\beta+\beta^{2}\sigma_1^{*2}}{\sigma_1^*} & \beta\left(
-\frac{\beta^{2}\sigma_1^{*2}}{1+\beta}+\beta-2\right)  \\
\beta\left(  -\frac{\beta^{2}\sigma_1^{*2}}{1+\beta}+\beta-2\right)  
&
\frac{1}{\sigma_1^*}\left(  \frac{\beta^{4}\sigma_1^{*4}}{\left(
1+\beta\right)  ^{2}}+\frac{6\beta^{2}\sigma_1^{*2}}{1+\beta}+\beta
^{2}\left(  1-2\sigma_1^{*2}\right)  +2\right)
\end{array}
\right)  ,\nonumber
\end{align}
and%
\[
\mathcal{L}\left(  \beta,\mu_1^*,\sigma_1^*\right)  =\frac{\exp\left(
-\beta\mu_1^*+\frac{\beta^{2}\sigma_1^{*2}}{2\left(  1+\beta\right)
}\right)  }{\sigma_1^{*1+\beta}(2\pi)^{\frac{\beta}{2}}\left(
1+\beta\right)  ^{5/2}}.
\]
Applying the Central Limit Theorem and after some algebra it is not difficult
to see that
\begin{equation}
\sqrt{n_1}\mathbf{h}_{n_{1},\beta}^{\prime}(\mu_1^*,\sigma_1^*)\underset{n_{1}%
\rightarrow\infty}{\overset{\mathcal{L}}{\longrightarrow}}\mathcal{N}\left(
\boldsymbol{0}_{2},\boldsymbol{K}_{\beta}(\mu_1^*,\sigma_1^*)\right)
,\label{10} \nonumber%
\end{equation}
where
\begin{align}
\boldsymbol{K}_{\beta}(\mu_1^*,\sigma_1^*) &  =\mathcal{L}^{\ast}\left(
\beta,\mu_1^*,\sigma_1^*\right)  \nonumber\\
& \times \left(
\begin{array}
[c]{cc}%
\frac{1+2\beta+4\beta^{2}\sigma_1^{*2}}{\sigma_1^*} & \beta\left(
\frac{-8\sigma_1^{*2}\beta^{2}}{1+2\beta}+4\beta-4\right)  \\
\beta\left(  \frac{-8\sigma_1^{*2}\beta^{2}}{1+2\beta}+4\beta-4\right)   &
\frac{1}{\sigma_1^*}\left(  \frac{16\beta^{4}\sigma_1^{*4}}{\left(
1+2\beta\right)  ^{2}}+\frac{24\sigma_1^{*2}\beta^{2}}{1+2\beta}+4\beta
^{2}\left(  1-2\sigma_1^{*2}\right)  +2\right)
\end{array}
\right)  \nonumber\\
&  -\mathcal{L}^{\ast\ast}\left(  \beta,\mu_1^*,\sigma_1^*\right)  \left(
\begin{array}
[c]{cc}%
\sigma_1^{*2}\beta^{2} & -\frac{\sigma_1^*\beta^{2}\left(  -1-\beta
+\beta\sigma_1^{*2}\right)  }{1+\beta}\\
-\frac{\sigma_1^*\beta^{2}\left(  -1-\beta+\beta\sigma_1^{*2}\right)
}{1+\beta} & \frac{\beta^{2}\left(  -1-\beta+\beta\sigma_1^{*2}\right)
^{2}}{\left(  1+\beta\right)  ^{2}}%
\end{array}
\right)  .\label{11} \nonumber%
\end{align}
Here%
\[
\mathcal{L}^{\ast}\left(  \beta,\mu_1^*,\sigma_1^*\right)  =\frac
{\exp\left(  -2\beta\mu_1^*+2\frac{\beta^{2}\sigma_1^{*2}}{1+2\beta
}\right)  }{\sigma_1^{*2\beta+1}(2\pi)^{\beta}\left(  1+2\beta\right)
^{5/2}}\text{ and }\mathcal{L}^{\ast\ast}\left(  \beta,\mu_1^*,\sigma
_1^*\right)  =\frac{\exp\left(  -2\beta\mu_1^*+\frac{\beta^{2}\sigma
_1^{*2}}{1+\beta}\right)  }{\sigma_1^{*2\beta+2}(2\pi)^{\beta}\left(
1+\beta\right)  ^{3}}.
\]
Finally, by (\ref{muSigma})%
\begin{equation}
\sqrt{n_{1}}%
\begin{pmatrix}
\widehat{\mu}_{1,\beta}-\mu_1^*\\
\widehat{\sigma}_{1,\beta}-\sigma_1^*%
\end{pmatrix}
\underset{n_{1}\rightarrow\infty}{\overset{\mathcal{L}}{\longrightarrow}%
}\mathcal{N}\left(  \boldsymbol{0}_{2},\boldsymbol{\Sigma}_{\beta
}(\mathcal{\mu}_1^*,\sigma_1^*)\right)  ,\label{as}%
\end{equation}
where%
\[
\boldsymbol{\Sigma}_{\beta}(\mathcal{\mu}_1^*,\sigma_1^*)=\boldsymbol{J}%
_{\beta}^{-1}(\mu_1^*,\sigma_1^*)\boldsymbol{K}_{\beta}(\mu_1^*%
,\sigma_1^*)\boldsymbol{J}_{\beta}^{-1}(\mu_1^*,\sigma_1^*).
\]

The influence function ($IF$) introduced by \cite{MR2617979,MR0362657}  indicates how an
infinitesimal proportion of contamination affects the estimate in large
samples. Formally, the $IF$ gives a quantitative expression of the change in the
estimate that results from perturbing the underlying distribution, $F$, by a
point mass at a certain location and it is the most useful heuristic tool of
robust statistics. \textquotedblleft The $IF$ is mainly a heuristic tool, with
an intuitive interpretation\textquotedblright\ (\citealt{MR829458}, p. 83).

We denote by $F_{\mu_1^*,\sigma_1^*}$ the distribution function of a
log-normal distribution with parameters $\mu_1^*$ and $\sigma_1^*$. For
the MDPD functional, $\boldsymbol{T}_{\beta}^{1}$,
the $IF$, at $F_{\mu_1^*,\sigma_1^*},$ can be expressed as
\[
\mathcal{IF}\left(  x,\boldsymbol{T}_{\beta}^{1},F_{\mu_1^*,\sigma_1^*%
}\right)  =\lim_{\varepsilon\rightarrow0}\frac{\boldsymbol{T}_{\beta}%
^{1}((1-\varepsilon)F_{\mu_1^*,\sigma_1^*}+\varepsilon\delta
_{x})-\boldsymbol{T}_{\beta}^{1}(F_{\mu_1^*,\sigma_1^*})}{\varepsilon},
\]
where $\delta_{x}$ is the distribution function corresponding to the degenerate
distribution at $x$. It measures the normalized asymptotic bias caused by an
infinitesimal contamination at point $x$ in the observations.
Hence, the $IF$ reflects the bias caused by adding a few outliers at the point,
standardized by the amount $\varepsilon$ of contamination. Therefore, a
bounded $IF$ leads to robust estimators. Note that this kind of differentiation
of statistical functionals is a differentiation in the sense of von Mises.

In \cite{MR1665873} the influence function for the MDPD
functional is given as
\begin{equation}
\mathcal{IF}\left(  x,\boldsymbol{T}_{\beta}^{1},F_{_{\mu_1^*,\sigma_1^*}%
}\right)  =\boldsymbol{J}_{\beta}^{-1}\left(  \mu_1^*,\sigma_1^*\right)
\left(  \boldsymbol{u}_{\mu_1^*,\sigma_1^*}(x)f_{\mu_1^*,\sigma_1^*%
}^{\beta}(x)-\boldsymbol{\xi}\left(  \mu_1^*,\sigma_1^*\right)  \right)
,\label{IF}%
\end{equation}
where $\boldsymbol{u}_{\mu_1^*,\sigma_1^*}(x)$ is the score function
given, for the log-normal model, by
\[
\boldsymbol{u}_{\mu_1^*,\sigma_1^*}(x)=\left(
\begin{array}
[c]{c}%
\frac{\log x-\mu_1^*}{\sigma_1^*}\\
\frac{1}{\sigma_1^*}\left(  \left(  \frac{\log x-\mu_1^*}{\sigma_1^*%
}\right)  ^{2}-1\right)
\end{array}
\right)
\]
and
\begin{align*}
\boldsymbol{\xi}\left(  \mu_1^*,\sigma_1^*\right)   &  =\int_{0}^{\infty
}\boldsymbol{u}_{\mu_1^*,\sigma_1^*}(x)f_{\mu_1^*,\sigma_1^*}%
^{1+\beta}(x)dx\\
&  =\frac{1}{\sigma_1^{*1+\beta}\left(  2\pi\right)  ^{\frac{\beta}{2}%
}\left(  1+\beta\right)  ^{\frac{3}{2}}}\exp\left(  -\mu_1^*\beta
+\frac{\sigma_1^{*2}\beta^{2}}{2\left(  1+\beta\right)  }\right)  \left(
\begin{array}
[c]{c}%
-\sigma_1^*\beta\\
\frac{\beta\left(  1-\beta+\beta\sigma_1^{*2}\right)  }{1+\beta}%
\end{array}
\right)  .
\end{align*}
After some fairly simple  algebra it is possible to see that Equation (\ref{IF}) is a bounded function
of $x$ for $\beta>0$. For $\beta=0$, which generated the MLEs, the influence function is given by 
\[
\mathcal{IF}\left(  x,\boldsymbol{T}_{\beta=0}^{1},F_{_{\mu_1^*,\sigma
_1^*}}\right)  =%
\begin{pmatrix}
\log x-\mu_1^*\\
-\frac{1}{2\sigma_1^*}\left(  \sigma_1^{*2}-\mu_1^{*2}+2\mu_1^*\log
x-\log^{2}x\right)
\end{pmatrix}
,
\]
which is unbounded. In terms of the influence function, therefore,  the MLE is a non-robust
estimator and the MDPD estimators are robust
estimators for $\beta>0$.

In Figures \ref{fig:lp1} and \ref{fig:lp2}, we present respectively the first
and second components of $\mathcal{IF}\left(  x,\boldsymbol{T}_{\beta}%
^{1},F_{_{\mu_1^*,\sigma_1^*}}\right)  $, when $\mu_1^*=0$
and$\ \sigma_1^*=1$, for $\beta=0$, $0.3$, $0.5$\ and $1$. The curves eventually become stable at all cases except $\beta = 0$.

\begin{figure}
\centering
\begin{tabular}
[c]{c}${{\includegraphics[height=2.5in,width=4.2704in]{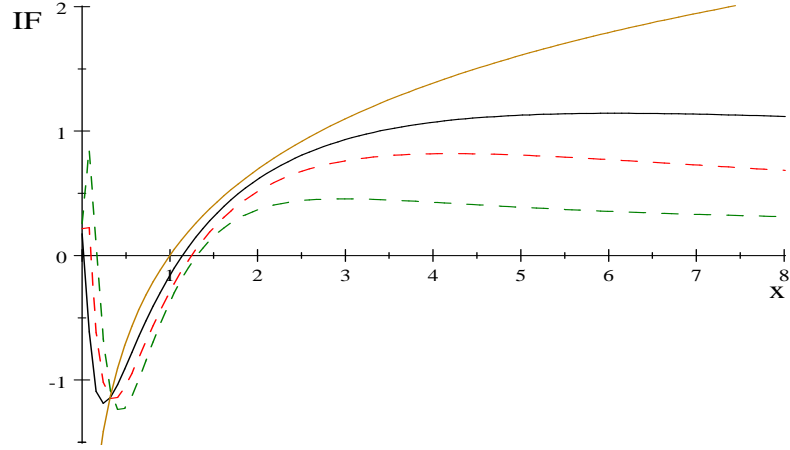}}}$%
\end{tabular}
\caption{First component of $\mathcal{IF}\left(  x,\boldsymbol{T}_{\beta}^{1},F_{_{\mu_1^*,\sigma_1^*}}\right)  $ with $\mu_1^*=0\ $and $\sigma
_1^*=1$; lines $\beta=0$ (yellow), $\beta=0.3$ (black), $\beta=0.5$ (red) and $\beta=1$ (green).}
\label{fig:lp1}
\end{figure}
\begin{figure}
\centering
\begin{tabular}[c]{c}%
${{\includegraphics[height=2.5in,width=4.2704in]{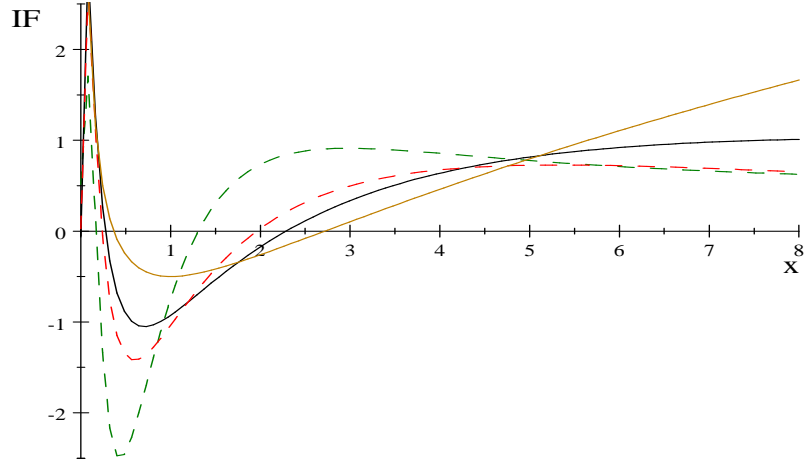}}}$%
\end{tabular}
\caption{Second component of $\mathcal{IF}\left(  x,\boldsymbol{T}_{\beta}^{1},F_{_{\mu_1^*,\sigma_1^*}}\right)  $ with $\mu_1^*=0\ $and $\sigma_1^*=1$; lines $\beta=0$ (yellow), $\beta=0.3$ (black), $\beta=0.5$ (red) and $\beta=1$ (green).}
\label{fig:lp2}
\end{figure}

\section{Wald-type test statistics based on MDPD estimators\label{sec3}}

In this section we will present a procedure based on a Wald-type
statistic for testing the hypothesis in  (\ref{1}) using the MDPD estimator. Our hypotheses of
interest are
\begin{equation}
H_{0}:\exp(\mathcal{\mu}_{1}+\tfrac{\sigma_{1}^{2}}{2})=\exp(\mathcal{\mu}%
_{2}+\tfrac{\sigma_{2}^{2}}{2})\text{ against }H_{1}:\exp(\mathcal{\mu}%
_{1}+\tfrac{\sigma_{1}^{2}}{2})\neq\exp(\mathcal{\mu}_{2}+\tfrac{\sigma
_{2}^{2}}{2}). \label{A}%
\end{equation}
In this case, the parameter space is given by
\[
\Lambda=\left\{  \boldsymbol{\eta}=\left(  \mathcal{\mu}_{1},\sigma
_{1},\mathcal{\mu}_{2},\sigma_{2}\right)  ^{T} : \mathcal{\mu}_{i}%
\in\mathbb{R}\text{ and }\sigma_{i}\in\mathbb{R}^{+},\text{ }i=1,2\right\}
\]
and if we denote
\[
m\left(  \boldsymbol{\eta}\right)  =\exp(\mathcal{\mu}_{1}+\tfrac{\sigma
_{1}^{2}}{2})-\exp(\mathcal{\mu}_{2}+\tfrac{\sigma_{2}^{2}}{2}),
\]
the null hypothesis can be defined by
\[
\Lambda_{0}=\left\{  \boldsymbol{\eta}=\left(  \mathcal{\mu}_{1},\sigma
_{1},\mathcal{\mu}_{2},\sigma_{2}\right)  ^{T}\in\Lambda :  m\left(
\boldsymbol{\eta}\right)  =0\right\}  .
\]
It is clear that
\begin{equation}
\frac{\partial m(\boldsymbol{\eta})}{\partial\boldsymbol{\eta}}=\left(
\exp(\mathcal{\mu}_{1}+\tfrac{\sigma_{1}^{2}}{2}),\sigma_{1}\exp(\mathcal{\mu
}_{1}+\tfrac{\sigma_{1}^{2}}{2}),-\exp(\mathcal{\mu}_{2}+\tfrac{\sigma_{2}%
^{2}}{2}),-\sigma_{2}\exp(\mathcal{\mu}_{2}+\tfrac{\sigma_{2}^{2}}{2})\right)
^{T}. \label{dg}%
\end{equation}

We consider a simple random sample $X_{1},...,X_{n_{1}}$ from population $\log
X\sim\mathcal{N(\mu}_{1},\sigma_{1}^{2})$ and a simple random sample
$Y_{1},...,Y_{n_{2}}$ from population $\log Y\sim\mathcal{N(\mu}_{2}%
,\sigma_{2}^{2})$. We shall assume that the two samples are independent.
We shall denote by $(\widehat{\mathcal{\mu}}_{1,\beta},\widehat{\sigma}%
_{1,\beta})^{T}$ the MDPD estimator based on $X_{1},...,X_{n_{1}}$ and
$(\widehat{\mathcal{\mu}}_{2,\beta},\widehat{\sigma}_{2,\beta})^{T}$ the
MDPD estimator based on $Y_{1},...,Y_{n_{2}}$ (see Section \ref{sec2}, for
more details). In the next theorem we shall establish the asymptotic distribution of
$\widehat{\boldsymbol{\eta}}_{\beta}=(\widehat{\mathcal{\mu}}_{1,\beta
},\widehat{\sigma}_{1,\beta},\widehat{\mathcal{\mu}}_{2,\beta
},\widehat{\sigma}_{2,\beta})^{T}$.

\begin{theorem}
Let
\begin{equation}
w=\lim_{n_{1},n_{2}\rightarrow\infty}\frac{n_{1}}{n_{1}+n_{2}}\in
(0,1)\label{EQ:w}%
\end{equation}
be the limiting proportion of observations from the first population in the
whole sample. Then, the MDPD estimator of $\boldsymbol{\eta}$,
$\widehat{\boldsymbol{\eta}}_{\beta}$, has the asymptotic distribution given
by
\begin{equation}
\sqrt{\frac{n_{1}n_{2}}{n_{1}+n_{2}}}(\widehat{\boldsymbol{\eta}}_{\beta
}-\boldsymbol{\eta}_{0})\underset{n_{1},n_{2}\rightarrow\infty
}{\overset{\mathcal{L}}{\longrightarrow}}\mathcal{N}\left(  \boldsymbol{0}%
_{4},\boldsymbol{\Sigma}_{w,\beta}(\boldsymbol{\eta}_{0})\right)
,\label{eqTh0}%
\end{equation}
where $\boldsymbol{\eta}_{0}=\left(  \mathcal{\mu}_1^*,\sigma_1^*%
,\mathcal{\mu}_2^*,\sigma_2^*\right)  ^{T}$ is the true parameter vector
of $\boldsymbol{\eta}$,%
\begin{equation}
\boldsymbol{\Sigma}_{w,\beta}(\boldsymbol{\eta})=\left(
\begin{array}
[c]{cc}%
\left(  1-w\right)  \boldsymbol{\Sigma}_{\beta}(\mathcal{\mu}_{1},\sigma
_{1}) & \boldsymbol{0}_{2\times2}\\
\boldsymbol{0}_{2\times2} & w\boldsymbol{\Sigma}_{\beta}(\mathcal{\mu}%
_{2},\sigma_{2})
\end{array}
\right)  ,\label{sigMat}%
\end{equation}
with
\[
\boldsymbol{\Sigma}_{\beta}(\mathcal{\mu}_{i},\sigma_{i})=\boldsymbol{J}%
_{\beta}^{-1}(\mathcal{\mu}_{i},\sigma_{i})\boldsymbol{K}_{\beta}%
(\mathcal{\mu}_{i},\sigma_{i})\boldsymbol{J}_{\beta}^{-1}(\mathcal{\mu}%
_{i},\sigma_{i}),\quad i=1,2.
\]
Here, $\boldsymbol{0}_{n}$ is null vector of length $n$, and $\boldsymbol{0}%
_{n\times n}$ is a null matrix of dimension $n\times n$.
\end{theorem}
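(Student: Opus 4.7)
The plan is to derive the joint asymptotic normality in (\ref{eqTh0}) from the single-sample asymptotics (\ref{as}) of Section~\ref{sec2}, combined with the independence of the two samples and Slutsky's theorem. The only real work is to reconcile the two individual $\sqrt{n_{i}}$-scalings into the common rate $\sqrt{n_{1}n_{2}/(n_{1}+n_{2})}$.

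First, I would apply (\ref{as}) separately to each of the two samples. Since the derivations of Section~\ref{sec2} are symmetric in the two populations (population $Y$ plays exactly the same role as population $X$, with parameters $(\mu_{2}^{*},\sigma_{2}^{*})$), the same argument yields, for $i=1,2$, the marginal statement that $\sqrt{n_{i}}\,(\widehat{\mu}_{i,\beta}-\mu_{i}^{*},\widehat{\sigma}_{i,\beta}-\sigma_{i}^{*})^{T}$ converges in distribution to $\mathcal{N}(\boldsymbol{0}_{2},\boldsymbol{\Sigma}_{\beta}(\mu_{i}^{*},\sigma_{i}^{*}))$.

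Next, I would put both marginal statements on the common rate $\sqrt{n_{1}n_{2}/(n_{1}+n_{2})}$. Using the factorizations $\sqrt{n_{1}n_{2}/(n_{1}+n_{2})} = \sqrt{n_{1}}\cdot\sqrt{n_{2}/(n_{1}+n_{2})} = \sqrt{n_{2}}\cdot\sqrt{n_{1}/(n_{1}+n_{2})}$, and invoking (\ref{EQ:w}) so that $n_{2}/(n_{1}+n_{2})\to 1-w$ and $n_{1}/(n_{1}+n_{2})\to w$, Slutsky's theorem immediately gives the rescaled limits $\mathcal{N}(\boldsymbol{0}_{2},(1-w)\boldsymbol{\Sigma}_{\beta}(\mu_{1}^{*},\sigma_{1}^{*}))$ for the first pair of estimators and $\mathcal{N}(\boldsymbol{0}_{2},w\,\boldsymbol{\Sigma}_{\beta}(\mu_{2}^{*},\sigma_{2}^{*}))$ for the second pair.

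Finally, I would assemble the joint limit. Because the two samples are assumed independent, the rescaled $2$-vectors associated with the two populations are independent for every $(n_{1},n_{2})$, and hence their joint weak limit is Gaussian with block-diagonal covariance, which equals precisely $\boldsymbol{\Sigma}_{w,\beta}(\boldsymbol{\eta}_{0})$ in (\ref{sigMat}). This yields (\ref{eqTh0}). The main obstacle is essentially bookkeeping: no fresh asymptotic theory is needed beyond (\ref{as}); one only needs careful handling of the ratio $w$ via Slutsky and the observation that independence of the two samples produces a block-diagonal asymptotic covariance.
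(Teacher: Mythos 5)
Your proof is correct and follows essentially the same route as the paper, whose own proof is just a one-sentence appeal to the single-sample convergence in (\ref{as}), the analogous result for the $Y$ sample, and independence. Your write-up merely makes explicit the rescaling bookkeeping (the factorization of $\sqrt{n_{1}n_{2}/(n_{1}+n_{2})}$, Slutsky, and the block-diagonal limit from independence) that the paper leaves implicit.
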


\begin{proof}
The result follows from the asymptotic distribution for the first population, the convergence in Equation (\ref{as}), the corresponding convergence for the $Y$ population, and the independence of the samples. 
\hfill\rule{0.7em}{0.7em}
\end{proof}

\begin{definition}
We define the Wald-type test statistic based on MDPD estimator for testing hypotheses in (\ref{A}) by
\begin{equation}
W_{n_{1},n_{2}}^{\beta}=\frac{n_{1}n_{2}}{n_{1}+n_{2}}\frac{m^{2}%
(\widehat{\boldsymbol{\eta}}_{\beta})}{\sigma_{\beta,m}^{2}%
(\widehat{\boldsymbol{\eta}}_{\beta})},\label{wald}%
\end{equation}
where%
\begin{equation}
\sigma_{\beta,m}^{2}(\boldsymbol{\eta})=\left(\frac{\partial m(\boldsymbol{\eta}%
)}{\partial\boldsymbol{\eta}} \right)^{T}\boldsymbol{\Sigma}_{w,\beta}%
(\boldsymbol{\eta})\frac{\partial m(\boldsymbol{\eta})}{\partial
\boldsymbol{\eta}}.\label{sigmag}%
\end{equation}

\end{definition}

In the next theorem we shall establish the asymptotic distribution of
$W_{n_{1},n_{2}}^{\beta}$.

\begin{theorem}
Under the conditions of the previous theorem we get, asymptotically, 
\[
W_{n_{1},n_{2}}^{\beta}\underset{n_{1},n_{2}\rightarrow\infty
}{\overset{\mathcal{L}}{\longrightarrow}}\mathcal{\chi}_{1}^{2}.
\]

\end{theorem}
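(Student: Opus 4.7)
The plan is to combine a first-order delta method on the scalar restriction $m(\boldsymbol{\eta})$ with the joint asymptotic normality of $\widehat{\boldsymbol{\eta}}_{\beta}$ established in the previous theorem, and then invoke Slutsky's theorem to handle the plug-in estimator of the standardizing variance.

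First, I would perform a Taylor expansion of $m$ around the true parameter $\boldsymbol{\eta}_{0}$. Since under $H_{0}$ we have $m(\boldsymbol{\eta}_{0})=0$, and $m$ is clearly $\mathcal{C}^{1}$ in a neighborhood of $\boldsymbol{\eta}_{0}$, one gets
\begin{equation*}
m(\widehat{\boldsymbol{\eta}}_{\beta})
= \left(\frac{\partial m(\boldsymbol{\eta}_{0})}{\partial\boldsymbol{\eta}}\right)^{T}(\widehat{\boldsymbol{\eta}}_{\beta}-\boldsymbol{\eta}_{0})
+ o_{p}\!\left(\|\widehat{\boldsymbol{\eta}}_{\beta}-\boldsymbol{\eta}_{0}\|\right).
\end{equation*}
Multiplying by $\sqrt{n_{1}n_{2}/(n_{1}+n_{2})}$ and applying Equation (\ref{eqTh0}) together with the continuous mapping theorem yields
\begin{equation*}
\sqrt{\frac{n_{1}n_{2}}{n_{1}+n_{2}}}\, m(\widehat{\boldsymbol{\eta}}_{\beta})
\underset{n_{1},n_{2}\rightarrow\infty}{\overset{\mathcal{L}}{\longrightarrow}}
\mathcal{N}\!\left(0,\sigma_{\beta,m}^{2}(\boldsymbol{\eta}_{0})\right),
\end{equation*}
where $\sigma_{\beta,m}^{2}(\boldsymbol{\eta}_{0})$ is exactly the quadratic form defined in (\ref{sigmag}).

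Next, I would argue that the plug-in variance is consistent. The gradient $\partial m/\partial\boldsymbol{\eta}$ in (\ref{dg}) is a continuous function of $\boldsymbol{\eta}$, and the entries of $\boldsymbol{J}_{\beta}^{-1}$ and $\boldsymbol{K}_{\beta}$ (as displayed in Section \ref{sec2}) are continuous functions of $(\mu_{i},\sigma_{i})$ on $\mathbb{R}\times\mathbb{R}^{+}$. Consistency of $\widehat{\boldsymbol{\eta}}_{\beta}$ (a consequence of the asymptotic normality in Theorem 1) combined with the continuous mapping theorem therefore gives
\begin{equation*}
\sigma_{\beta,m}^{2}(\widehat{\boldsymbol{\eta}}_{\beta})\overset{P}{\longrightarrow}\sigma_{\beta,m}^{2}(\boldsymbol{\eta}_{0}).
\end{equation*}
Squaring the asymptotic normality statement above, dividing by $\sigma_{\beta,m}^{2}(\widehat{\boldsymbol{\eta}}_{\beta})$, and applying Slutsky's theorem then yields $W_{n_{1},n_{2}}^{\beta}\xrightarrow{\mathcal{L}}\chi_{1}^{2}$, because the square of an $\mathcal{N}(0,1)$ variable is $\chi_{1}^{2}$.

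The only point that requires a little care, and is the main obstacle, is to check that $\sigma_{\beta,m}^{2}(\boldsymbol{\eta}_{0})>0$ so that the ratio in the Slutsky step is well defined and non-degenerate. This follows because $\partial m/\partial\boldsymbol{\eta}$ evaluated at $\boldsymbol{\eta}_{0}$ is nonzero (the first entry $\exp(\mu_{1}^{*}+\sigma_{1}^{*2}/2)$ is strictly positive), and $\boldsymbol{\Sigma}_{w,\beta}(\boldsymbol{\eta}_{0})$ is positive definite since the two diagonal blocks $\boldsymbol{J}_{\beta}^{-1}\boldsymbol{K}_{\beta}\boldsymbol{J}_{\beta}^{-1}$ are positive definite (both $\boldsymbol{J}_{\beta}$ and $\boldsymbol{K}_{\beta}$ are derived from valid sandwich-type Fisher-information analogues at $\beta>0$, with the MLE case $\beta=0$ being standard). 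Hence the quadratic form is strictly positive, and the argument closes.
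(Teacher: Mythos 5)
Your proposal is correct and follows essentially the same route as the paper: a first-order Taylor expansion of $m$ about $\boldsymbol{\eta}_{0}$ using $m(\boldsymbol{\eta}_{0})=0$, the delta method applied to the asymptotic normality in (\ref{eqTh0}), and Slutsky's theorem for the plug-in variance. In fact you supply more detail than the paper's one-line argument, including the consistency of $\sigma_{\beta,m}^{2}(\widehat{\boldsymbol{\eta}}_{\beta})$ and the non-degeneracy check $\sigma_{\beta,m}^{2}(\boldsymbol{\eta}_{0})>0$, both of which the paper leaves implicit.
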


\begin{proof}
It is clear that
\[
\sqrt{\frac{n_{1}n_{2}}{n_{1}+n_{2}}}m(\widehat{\boldsymbol{\eta}}_{\beta
})=\sqrt{\frac{n_{1}n_{2}}{n_{1}+n_{2}}}m\left(  \boldsymbol{\eta}_{0}\right)
+\sqrt{\frac{n_{1}n_{2}}{n_{1}+n_{2}}}\left.  \frac{\partial
m(\boldsymbol{\eta})}{\partial\boldsymbol{\eta}^{T}}\right\vert
_{\boldsymbol{\eta}=\boldsymbol{\eta}_{0}}(\widehat{\boldsymbol{\eta}}_{\beta
}-\boldsymbol{\eta}_{0})+o_{p}(1).
\]
Now, the result follows because $m\left(  \boldsymbol{\eta}_{0}\right)  =0.$
\hfill\rule{0.7em}{0.7em}
\end{proof}

\begin{remark}
In the particular case of $\beta=0$ we will get the MLEs and we have that $\boldsymbol{J}_{\beta=0}(\mu_{i},\sigma_{i}%
)$, $\boldsymbol{K}_{\beta=0}(\mu_{i},\sigma_{i})$ matrices coincide with the
Fisher information matrix%
\[
\boldsymbol{J}_{\beta=0}(\mu_{i},\sigma_{i})=\boldsymbol{K}_{\beta=0}(\mu
_{i},\sigma_{i})=\left(
\begin{array}
[c]{cc}%
\frac{1}{\sigma_{i}^{2}} & 0\\
0 & \frac{2}{\sigma_{i}^{2}}%
\end{array}
\right)  =\boldsymbol{I}_{F}(\mu_{i},\sigma_{i}),\quad i=1,2.
\]
Therefore
\[
\boldsymbol{\Sigma}_{\beta=0}(\mathcal{\mu}_{i},\sigma_{i})=\boldsymbol{I}%
_{F}^{-1}(\mu_{i},\sigma_{i}),\quad i=1,2.
\]
In this situation, the MDPD estimator based Wald-type test statistic,
$W_{n_{1},n_{2}}^{\beta=0}$, coincides with the classical Wald test
\[
\frac{n_{1}n_{2}}{n_{1}+n_{2}}\frac{m^{2}(\widehat{\boldsymbol{\eta}}%
_{\beta=0})}{\sigma_{\beta=0,m}^{2}(\widehat{\boldsymbol{\eta}}_{\beta=0})},
\]
where%
\[
\sigma_{\beta=0,m}^{2}(\widehat{\boldsymbol{\eta}}_{\beta=0})=\left.
\frac{\partial m(\boldsymbol{\eta})}{\partial\boldsymbol{\eta}^{T}}\right\vert
_{\boldsymbol{\eta=}\widehat{\boldsymbol{\eta}}_{\beta=0}}%
\begin{pmatrix}
(1-w)\boldsymbol{I}_{F}^{-1}(\widehat{\mu}_{1,\beta=0},\widehat{\sigma}%
_{1,\beta=0}) & \boldsymbol{0}_{2\times2}\\
\boldsymbol{0}_{2\times2} & w\boldsymbol{I}_{F}^{-1}(\widehat{\mu
}_{2,\beta=0},\widehat{\sigma}_{2,\beta=0})
\end{pmatrix}
\left.  \frac{\partial m(\boldsymbol{\eta})}{\partial\boldsymbol{\eta}%
}\right\vert _{\boldsymbol{\eta=}\widehat{\boldsymbol{\eta}}_{\beta=0}}.
\]

\end{remark}

Based on the previous theorem, we shall reject the null hypothesis in
(\ref{A}) if $W_{n_{1},n_{2}}^{\beta}>\chi_{1,\alpha}^{2}$, with
$\chi_{1,\alpha}^{2}$ being the $100(1-\alpha)$-th percentile point of
chi-squared distribution with $1$ degree of freedom. 

Now, we are going to get
an approximation to the power function for the Wald-type test statistics given
by Equation (\ref{wald}).
We consider a point under the alternative hypothesis of (\ref{A}), i.e.
$\boldsymbol{\eta}^{\ast}=\left(  \mathcal{\mu}_{1}^{\ast},\sigma_{1}^{\ast
},\mathcal{\mu}_{2}^{\ast},\sigma_{2}^{\ast}\right)  ^{T}\in\Lambda
-\Lambda_{0}$. We could then assume that $\boldsymbol{\eta}^{\ast}$ is the
true value of the unknown parameter and
\[
\widehat{\boldsymbol{\eta}}_{\beta}\underset{n_{1},n_{2}\rightarrow
\infty}{\overset{\mathcal{P}}{\longrightarrow}}\boldsymbol{\eta}^{\ast}.
\]

\begin{theorem}
Under the alternative hypothesis in (\ref{A}), we have
\[
\sqrt{\frac{n_{1}n_{2}}{n_{1}+n_{2}}}\left(  V_{\beta}%
(\widehat{\boldsymbol{\eta}}_{\beta},\widehat{\boldsymbol{\eta}}_{\beta
})-V_{\beta}(\boldsymbol{\eta}^{\ast},\boldsymbol{\eta}^{\ast})\right)
\underset{n_{1},n_{2}\rightarrow\infty}{\overset{\mathcal{L}}{\longrightarrow
}}\mathcal{N}\left(  0,\phi_{\beta,m}^{2}\left(  \boldsymbol{\eta}^{\ast
}\right)  \right)
\]
where%
\begin{align*}
V_{\beta}(\boldsymbol{\eta}_{1},\boldsymbol{\eta}_{2}) &  =\frac
{m^{2}(\boldsymbol{\eta}_{1})}{\sigma_{\beta,m}^{2}(\boldsymbol{\eta}_{2})},\\
\phi_{\beta,m}^{2}\left(  \boldsymbol{\eta}^{\ast}\right)   &  =\left.
\frac{\partial V_{\beta}(\boldsymbol{\eta}_{1},\boldsymbol{\eta}^{\ast}%
)}{\partial\boldsymbol{\eta}_{1}^{T}}\right\vert _{\boldsymbol{\eta}%
_{1}=\boldsymbol{\eta}^{\ast}}\boldsymbol{\Sigma}_{w,\beta}(\boldsymbol{\eta
}^{\ast})\left.  \frac{\partial V_{\beta}(\boldsymbol{\eta}_{1}%
,\boldsymbol{\eta}^{\ast})}{\partial\boldsymbol{\eta}_{1}}\right\vert
_{\boldsymbol{\eta}_{1}=\boldsymbol{\eta}^{\ast}}.
\end{align*}

\end{theorem}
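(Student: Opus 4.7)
The plan is to apply the multivariate delta method to the functional $V_\beta$ evaluated at $(\widehat{\boldsymbol{\eta}}_{\beta},\widehat{\boldsymbol{\eta}}_{\beta})$. Under the alternative, $\boldsymbol{\eta}^{\ast}$ plays the role of the true parameter, so the same argument that established Theorem~1 (applied now with $\boldsymbol{\eta}^{\ast}$ as the underlying truth rather than a null point) yields the consistency $\widehat{\boldsymbol{\eta}}_{\beta}\overset{\mathcal{P}}{\longrightarrow}\boldsymbol{\eta}^{\ast}$ together with the joint CLT
\[
\sqrt{\tfrac{n_1 n_2}{n_1+n_2}}(\widehat{\boldsymbol{\eta}}_{\beta}-\boldsymbol{\eta}^{\ast}) \underset{n_1,n_2\to\infty}{\overset{\mathcal{L}}{\longrightarrow}} \mathcal{N}\left(\boldsymbol{0}_4,\boldsymbol{\Sigma}_{w,\beta}(\boldsymbol{\eta}^{\ast})\right).
\]
These two facts are the main inputs.

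Next, I would perform a first-order Taylor expansion of $V_\beta$ about the point $(\boldsymbol{\eta}^{\ast},\boldsymbol{\eta}^{\ast})$. Since $\sigma_{\beta,m}^{2}$ is a smooth function of $\boldsymbol{\eta}$ (via the smoothness of $\partial m/\partial\boldsymbol{\eta}$ and of the matrices $\boldsymbol{J}_{\beta}$ and $\boldsymbol{K}_{\beta}$ in their arguments), consistency of $\widehat{\boldsymbol{\eta}}_{\beta}$ gives $\sigma_{\beta,m}^{2}(\widehat{\boldsymbol{\eta}}_{\beta})\overset{\mathcal{P}}{\longrightarrow}\sigma_{\beta,m}^{2}(\boldsymbol{\eta}^{\ast})$. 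Using Slutsky's theorem to freeze the second argument of $V_\beta$ at $\boldsymbol{\eta}^{\ast}$, the linearization becomes
\[
\sqrt{\tfrac{n_1 n_2}{n_1+n_2}}\left[V_\beta(\widehat{\boldsymbol{\eta}}_{\beta},\widehat{\boldsymbol{\eta}}_{\beta}) - V_\beta(\boldsymbol{\eta}^{\ast},\boldsymbol{\eta}^{\ast})\right] = \left.\frac{\partial V_\beta(\boldsymbol{\eta}_1,\boldsymbol{\eta}^{\ast})}{\partial \boldsymbol{\eta}_1^T}\right|_{\boldsymbol{\eta}_1=\boldsymbol{\eta}^{\ast}} \sqrt{\tfrac{n_1 n_2}{n_1+n_2}}(\widehat{\boldsymbol{\eta}}_{\beta}-\boldsymbol{\eta}^{\ast}) + o_p(1).
\]
Plugging in the CLT for $\widehat{\boldsymbol{\eta}}_{\beta}$ and using the standard quadratic-form identity for the covariance of a linear image of a Gaussian vector yields the claimed limiting law $\mathcal{N}(0,\phi_{\beta,m}^{2}(\boldsymbol{\eta}^{\ast}))$.

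The main obstacle will be the Slutsky/linearization step: at the $\sqrt{n_1 n_2/(n_1+n_2)}$ scale both the numerator $m^2(\widehat{\boldsymbol{\eta}}_{\beta})$ and the denominator $\sigma_{\beta,m}^{2}(\widehat{\boldsymbol{\eta}}_{\beta})$ fluctuate at the same rate, so one has to bookkeep carefully through the quotient rule to confirm that the denominator's stochastic fluctuations enter the asymptotic variance only through the deterministic limit $\sigma_{\beta,m}^{2}(\boldsymbol{\eta}^{\ast})$ reflected in the theorem's formula. The fact that $m(\boldsymbol{\eta}^{\ast})\neq 0$ under the alternative keeps $V_\beta$ smooth and bounded away from zero in a neighbourhood of $(\boldsymbol{\eta}^{\ast},\boldsymbol{\eta}^{\ast})$, which is essential for the non-degenerate delta method to apply; once this reduction is in place, the remaining computation of $\phi_{\beta,m}^{2}(\boldsymbol{\eta}^{\ast})$ is the routine substitution of $\boldsymbol{\Sigma}_{w,\beta}(\boldsymbol{\eta}^{\ast})$ and of $\partial m/\partial\boldsymbol{\eta}$ from (\ref{dg}) into the quadratic form.
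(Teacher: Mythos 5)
Your proposal matches the paper's proof essentially step for step: the paper likewise freezes the second argument of $V_{\beta}$ at $\boldsymbol{\eta}^{\ast}$ by appealing to the consistency $\widehat{\boldsymbol{\eta}}_{\beta}\overset{\mathcal{P}}{\longrightarrow}\boldsymbol{\eta}^{\ast}$, Taylor-expands in the first argument, and then invokes the CLT for $\widehat{\boldsymbol{\eta}}_{\beta}$ together with the usual quadratic-form identity for the variance. The ``main obstacle'' you flag --- that the denominator $\sigma_{\beta,m}^{2}(\widehat{\boldsymbol{\eta}}_{\beta})$ fluctuates at the same $\sqrt{n_{1}n_{2}/(n_{1}+n_{2})}$ rate --- is dispatched in the paper with exactly the same one-line appeal to consistency, so your treatment is no less (and no more) complete than the published one.
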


\begin{proof}
The asymptotic distribution of $V_{\beta}(\widehat{\boldsymbol{\eta}}_{\beta
},\widehat{\boldsymbol{\eta}}_{\beta})$ coincides with the asymptotic
distribution of $V_{\beta}(\widehat{\boldsymbol{\eta}}_{\beta}%
,\boldsymbol{\eta}^{\ast})$ because $\widehat{\boldsymbol{\eta}}_{\beta
}\underset{n_{1},n_{2}\rightarrow\infty}{\overset{\mathcal{P}}{\longrightarrow
}}\boldsymbol{\eta}^{\ast}$. A Taylor expansion gives,%
\[
V_{\beta}(\widehat{\boldsymbol{\eta}}_{\beta},\widehat{\boldsymbol{\eta}%
}_{\beta})-V_{\beta}(\boldsymbol{\eta}^{\ast},\boldsymbol{\eta}^{\ast
})=\left.  \frac{\partial V_{\beta}(\boldsymbol{\eta}_{1},\boldsymbol{\eta
}^{\ast})}{\partial\boldsymbol{\eta}_{1}^{T}}\right\vert _{\boldsymbol{\eta
}_{1}=\boldsymbol{\eta}^{\ast}}(\widehat{\boldsymbol{\eta}}_{\beta
}-\boldsymbol{\eta}^{\ast})+o_{p}(\left\Vert \widehat{\boldsymbol{\eta}%
}_{\beta}-\boldsymbol{\eta}^{\ast}\right\Vert ).
\]
But we know
\[
\sqrt{\frac{n_{1}n_{2}}{n_{1}+n_{2}}}(\widehat{\boldsymbol{\eta}}_{\beta
}-\boldsymbol{\eta}^{\ast})\underset{n_{1},n_{2}\rightarrow\infty
}{\overset{\mathcal{L}}{\longrightarrow}}\mathcal{N}\left(  \boldsymbol{0}%
_{4},\boldsymbol{\Sigma}_{w,\beta}(\boldsymbol{\eta}^{\ast})\right)  ,
\]
therefore the desired result is obtained. 
\hfill\rule{0.7em}{0.7em}
\end{proof}

Based on the previous result we can establish an approximation of the power
function for the Wald-type test statistics given by (\ref{wald}). We have,%
\begin{align*}
\mathrm{Power}_{W_{n_{1},n_{2}}^{\beta}}\left(  \boldsymbol{\eta}^{\ast
}\right)   &  =P\left(  W_{n_{1},n_{2}}^{\beta}>\chi_{1,\alpha}%
^{2}|\boldsymbol{\eta}=\boldsymbol{\eta}^{\ast}\right)  \\
&  =P\left(  \tfrac{n_{1}n_{2}}{n_{1}+n_{2}}V_{\beta}%
(\widehat{\boldsymbol{\eta}}_{\beta},\widehat{\boldsymbol{\eta}}_{\beta}%
)>\chi_{1,\alpha}^{2}|\boldsymbol{\eta}=\boldsymbol{\eta}^{\ast}\right)  \\
&  =P\left(  \tfrac{n_{1}n_{2}}{n_{1}+n_{2}}\left(  V_{\beta}%
(\widehat{\boldsymbol{\eta}}_{\beta},\widehat{\boldsymbol{\eta}}_{\beta
})-V_{\beta}(\boldsymbol{\eta}^{\ast},\boldsymbol{\eta}^{\ast})\right)
>\chi_{1,\alpha}^{2}-\tfrac{n_{1}n_{2}}{n_{1}+n_{2}}V_{\beta}(\boldsymbol{\eta
}^{\ast},\boldsymbol{\eta}^{\ast})\right)  \\
&  =P\left(  V_{\beta}(\widehat{\boldsymbol{\eta}}_{\beta}%
,\widehat{\boldsymbol{\eta}}_{\beta})-V_{\beta}(\boldsymbol{\eta}^{\ast
},\boldsymbol{\eta}^{\ast})>\tfrac{n_{1}+n_{2}}{n_{1}n_{2}}\chi_{1,\alpha}%
^{2}-V_{\beta}(\boldsymbol{\eta}^{\ast},\boldsymbol{\eta}^{\ast})\right)  \\
&  =P\left(  \sqrt{\tfrac{n_{1}n_{2}}{n_{1}+n_{2}}}\left(  V_{\beta
}(\widehat{\boldsymbol{\eta}}_{\beta},\widehat{\boldsymbol{\eta}}_{\beta
})-V_{\beta}(\boldsymbol{\eta}^{\ast},\boldsymbol{\eta}^{\ast})\right)
>\sqrt{\tfrac{n_{1}n_{2}}{n_{1}+n_{2}}}\left(  \tfrac{n_{1}+n_{2}}{n_{1}n_{2}%
}\chi_{1,\alpha}^{2}-V_{\beta}(\boldsymbol{\eta}^{\ast},\boldsymbol{\eta
}^{\ast})\right)  \right)  \\
&  =P\left(  \frac{1}{\phi_{\beta,m}\left(  \boldsymbol{\eta}^{\ast}\right)
}\sqrt{\tfrac{n_{1}n_{2}}{n_{1}+n_{2}}}\left(  V_{\beta}%
(\widehat{\boldsymbol{\eta}}_{\beta},\widehat{\boldsymbol{\eta}}_{\beta
})-V_{\beta}(\boldsymbol{\eta}^{\ast},\boldsymbol{\eta}^{\ast})\right)
\right.  \\
&  \qquad\left.  >\frac{1}{\phi_{\beta,m}\left(  \boldsymbol{\eta}^{\ast
}\right)  }\sqrt{\tfrac{n_{1}n_{2}}{n_{1}+n_{2}}}\left(  \tfrac{n_{1}+n_{2}%
}{n_{1}n_{2}}\chi_{1,\alpha}^{2}-V_{\beta}(\boldsymbol{\eta}^{\ast
},\boldsymbol{\eta}^{\ast})\right)  \right)  \\
&  =1-\Phi_{n_{1},n_{2}}\left(  \frac{1}{\phi_{\beta,m}\left(
\boldsymbol{\eta}^{\ast}\right)  }\sqrt{\tfrac{n_{1}n_{2}}{n_{1}+n_{2}}%
}\left(  \tfrac{n_{1}+n_{2}}{n_{1}n_{2}}\chi_{1,\alpha}^{2}-V_{\beta
}(\boldsymbol{\eta}^{\ast},\boldsymbol{\eta}^{\ast})\right)  \right)  ,
\end{align*}
where $\Phi_{n_{1},n_{2}}(\cdot)$ is a sequence of distribution functions
tending uniformly to the standard normal  distribution function $\Phi\left(
\cdot\right)  $. It is clear that%

\[
\lim_{n_{1},n_{2}\rightarrow\infty}\mathrm{Power}_{W_{n_{1},n_{2}}^{\beta}%
}\left(  \boldsymbol{\eta}^{\ast}\right)  =1
\]
for all $\alpha\in\left(  0,1\right)  $. Therefore, 
Wald-type test statistics based on the MDPD estimator are consistent in the sense of  \cite{MR0083868}.

We may also find some other approximations of the power of $W_{n_{1},n_{2}}^{\beta}$ at
an alternative close to the null hypothesis. Let $\boldsymbol{\eta}%
_{n_{1},n_{2}}\in\Lambda-\Lambda_{0}$ be a given alternative and let
$\boldsymbol{\eta}_{0}$ be the element in $\Lambda_{0}$ closest to
$\boldsymbol{\eta}_{n_{1},n_{2}}$ in the Euclidean distance sense. A first
possibility to introduce contiguous alternative hypotheses is to consider a
fixed vector $\boldsymbol{d}\in\mathbb{R}^{4}$ and to permit $\boldsymbol{\eta
}_{n_{1},n_{2}}$ to move towards $\boldsymbol{\eta}_{0}$ as $n_{1}$ and $n_{2}$
increase in the  manner given in the hypothesis
\begin{equation}
H_{1,n_{1},n_{2}}:\boldsymbol{\eta}_{n_{1},n_{2}}=\boldsymbol{\eta}%
_{0}+\left(  \tfrac{n_{1}n_{2}}{n_{1}+n_{2}}\right)  ^{-1/2}\boldsymbol{d}%
.\label{4.3.1}%
\end{equation}
A second approach is to relax the condition $m(\boldsymbol{\eta})=0$ which
defines $\Theta_{0}$. Let $\delta\in\mathbb{R}$ and consider the following
sequence, $m(\boldsymbol{\eta}_{n_{1},n_{2}})$, of parameters moving towards
$m(\boldsymbol{\eta}_{0})$ according to
\begin{equation}
H_{1,n_{1},n_{2}}^{\ast}:m(\boldsymbol{\eta}_{n_{1},n_{2}})=\left(
\tfrac{n_{1}n_{2}}{n_{1}+n_{2}}\right)  ^{-1/2}\delta.\label{4.3.2}%
\end{equation}
Note that a Taylor series expansion of $m(\boldsymbol{\eta}_{n_{1},n_{2}})$
around $\boldsymbol{\eta}_{0}$ yields
\begin{equation}
m(\boldsymbol{\eta}_{n_{1},n_{2}})=m(\boldsymbol{\eta}_{0})+\left.
\frac{\partial m(\boldsymbol{\eta})}{\partial\boldsymbol{\eta}^{T}}\right\vert
_{\boldsymbol{\eta}=\boldsymbol{\eta}_{0}}\left(  \boldsymbol{\eta}%
_{n_{1},n_{2}}-\boldsymbol{\eta}_{0}\right)  +o\left(  \left\Vert
\boldsymbol{\eta}_{n_{1},n_{2}}-\boldsymbol{\eta}_{0}\right\Vert \right)
.\label{4.4}%
\end{equation}
By substituting $\boldsymbol{\eta}_{n_{1},n_{2}}=\boldsymbol{\eta}_{0}+\left(
\frac{n_{1}n_{2}}{n_{1}+n_{2}}\right)  ^{-1/2}\boldsymbol{d}$ in Equation (\ref{4.4})
and taking into account that\textbf{\ }$m(\boldsymbol{\eta}_{0})=0$, we get
\begin{equation}
m(\boldsymbol{\eta}_{n_{1},n_{2}})=\left(  \tfrac{n_{1}n_{2}}{n_{1}+n_{2}%
}\right)  ^{-1/2}\left.  \frac{\partial m(\boldsymbol{\eta})}{\partial
\boldsymbol{\eta}^{T}}\right\vert _{\boldsymbol{\eta}=\boldsymbol{\eta}_{0}%
}\boldsymbol{d}+o\left(  \left\Vert \boldsymbol{\eta}_{n_{1},n_{2}%
}-\boldsymbol{\eta}_{0}\right\Vert \right)  ,\label{4.4.b}%
\end{equation}
so that the equivalence in the limit is obtained for $\delta=\left.
\frac{\partial m(\boldsymbol{\eta})}{\partial\boldsymbol{\eta}^{T}}\right\vert
_{\boldsymbol{\eta}=\boldsymbol{\eta}_{0}}\boldsymbol{d}.$

We have the following result.

\begin{theorem}
We have
\begin{enumerate}
\item[i)] $W_{n_{1},n_{2}}^{\beta}\underset{n_{1},n_{2}\mathcal{\rightarrow
}\infty}{\overset{\mathcal{L}}{\longrightarrow}}\chi_{1}^{2}\left(
\frac{\delta^{2}}{\sigma_{\beta,m}^{2}(\boldsymbol{\eta}_{0})}\right)  $ under
$H_{1,n_{1},n_{2}}$ given in (\ref{4.3.1}).

\item[ii)] $W_{n_{1},n_{2}}^{\beta}\underset{n_{1},n_{2}\mathcal{\rightarrow
}\infty}{\overset{\mathcal{L}}{\longrightarrow}}\chi_{1}^{2}\left(
\frac{\left(  \left.  \frac{\partial m(\boldsymbol{\eta})}{\partial
\boldsymbol{\eta}^{T}}\right\vert _{\boldsymbol{\eta}=\boldsymbol{\eta}_{0}%
}\boldsymbol{d}\right)  ^{2}}{\sigma_{\beta,m}^{2}(\boldsymbol{\eta}_{0}%
)}\right)  $ under $H_{1,n_{1},n_{2}}^{\ast}$ given in (\ref{4.3.2}).
\end{enumerate}
\end{theorem}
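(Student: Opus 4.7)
The plan is to reduce both statements to a single non-central limit statement for $\sqrt{a_n}\, m(\widehat{\boldsymbol{\eta}}_\beta)$, where $a_n = n_1 n_2/(n_1+n_2)$, via a first-order Taylor expansion of $m$ around $\boldsymbol{\eta}_0$, and then continuous mapping onto the Wald-type statistic defined in (\ref{wald}).

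For part (i), I would start by writing the Taylor expansion already used in the proof of the asymptotic $\chi_1^2$ result,
\begin{equation*}
\sqrt{a_n}\, m(\widehat{\boldsymbol{\eta}}_\beta) = \sqrt{a_n}\, m(\boldsymbol{\eta}_0) + \left.\frac{\partial m(\boldsymbol{\eta})}{\partial \boldsymbol{\eta}^T}\right|_{\boldsymbol{\eta}=\boldsymbol{\eta}_0} \sqrt{a_n}\,(\widehat{\boldsymbol{\eta}}_\beta - \boldsymbol{\eta}_0) + o_p(1),
\end{equation*}
and use $m(\boldsymbol{\eta}_0)=0$. Under $H_{1,n_1,n_2}$ the true value of the parameter is $\boldsymbol{\eta}_{n_1,n_2} = \boldsymbol{\eta}_0 + a_n^{-1/2}\boldsymbol{d}$, so I decompose
\begin{equation*}
\sqrt{a_n}(\widehat{\boldsymbol{\eta}}_\beta - \boldsymbol{\eta}_0) = \sqrt{a_n}(\widehat{\boldsymbol{\eta}}_\beta - \boldsymbol{\eta}_{n_1,n_2}) + \boldsymbol{d}.
\end{equation*}
By Theorem 1 (applied to the drifting true parameter $\boldsymbol{\eta}_{n_1,n_2}$) together with the continuity of $\boldsymbol{\Sigma}_{w,\beta}(\cdot)$ in $\boldsymbol{\eta}$, the first piece converges in law to $\mathcal{N}(\boldsymbol{0}_4,\boldsymbol{\Sigma}_{w,\beta}(\boldsymbol{\eta}_0))$, so the shifted sum converges to $\mathcal{N}(\boldsymbol{d},\boldsymbol{\Sigma}_{w,\beta}(\boldsymbol{\eta}_0))$. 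Plugging this into the Taylor expansion and using the delta method gives $\sqrt{a_n}\, m(\widehat{\boldsymbol{\eta}}_\beta)\to\mathcal{N}(\delta,\sigma_{\beta,m}^2(\boldsymbol{\eta}_0))$ with $\delta = (\partial m/\partial\boldsymbol{\eta}^T)|_{\boldsymbol{\eta}_0}\boldsymbol{d}$. Since $\widehat{\boldsymbol{\eta}}_\beta$ is consistent for $\boldsymbol{\eta}_0$ under the contiguous alternative and $\sigma_{\beta,m}^2(\cdot)$ is continuous, Slutsky's theorem yields $\sigma_{\beta,m}^2(\widehat{\boldsymbol{\eta}}_\beta)\to \sigma_{\beta,m}^2(\boldsymbol{\eta}_0)$ in probability, and the continuous mapping theorem applied to $W_{n_1,n_2}^\beta = a_n\, m^2(\widehat{\boldsymbol{\eta}}_\beta)/\sigma_{\beta,m}^2(\widehat{\boldsymbol{\eta}}_\beta)$ delivers the non-central $\chi_1^2(\delta^2/\sigma_{\beta,m}^2(\boldsymbol{\eta}_0))$ limit.

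For part (ii), I would invoke the expansion in (\ref{4.4.b}) to match the two sequences of alternatives. Any sequence $\boldsymbol{\eta}_{n_1,n_2}$ satisfying $H_{1,n_1,n_2}^*$ admits a direction vector $\boldsymbol{d}$ such that $\delta = (\partial m/\partial\boldsymbol{\eta}^T)|_{\boldsymbol{\eta}_0}\boldsymbol{d}$ up to $o(1)$ error, so the analysis in (i) applies verbatim with this choice of $\boldsymbol{d}$ and produces the non-centrality parameter $((\partial m/\partial\boldsymbol{\eta}^T)|_{\boldsymbol{\eta}_0}\boldsymbol{d})^2/\sigma_{\beta,m}^2(\boldsymbol{\eta}_0)$ claimed in the statement.

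The main obstacle is the step in which Theorem 1 is used under the drifting true parameter $\boldsymbol{\eta}_{n_1,n_2}$ rather than under a fixed true parameter. I would address this via a contiguity argument (equivalently, by inspecting the proof of Theorem 1 and verifying that the Taylor-remainder and central limit terms behind (\ref{muSigma}) and the analogous expansion for the $Y$ sample are uniform on shrinking neighbourhoods of $\boldsymbol{\eta}_0$, using the smoothness of $h_{n_i,\beta}$ and the continuity of $\boldsymbol{J}_\beta,\boldsymbol{K}_\beta$ as functions of $(\mu_i^*,\sigma_i^*)$). Once uniformity is in hand, replacing $\boldsymbol{\eta}_0$ by $\boldsymbol{\eta}_{n_1,n_2}$ in the limiting covariance introduces only an $o(1)$ perturbation, and the rest of the argument proceeds by Slutsky and the continuous mapping theorem as described.
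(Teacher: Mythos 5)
Your proof follows essentially the same route as the paper's: a first-order Taylor expansion of $m$ (the paper centres it at $\boldsymbol{\eta}_{n_{1},n_{2}}$ and invokes (\ref{4.4.b}), you centre at $\boldsymbol{\eta}_{0}$ and split off the drift $\boldsymbol{d}$ --- the same computation), asymptotic normality of $\sqrt{n_{1}n_{2}/(n_{1}+n_{2})}\,(\widehat{\boldsymbol{\eta}}_{\beta}-\boldsymbol{\eta}_{n_{1},n_{2}})$ under the drifting parameter, and then Slutsky plus the continuous mapping theorem applied to (\ref{wald_stat}). If anything you are more careful than the paper, which applies Theorem 1 at the drifting truth without the contiguity/uniformity justification you explicitly supply.
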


\begin{proof}
A Taylor series expansion of $m(\widehat{\boldsymbol{\eta}}_{\beta})$ around
$\boldsymbol{\eta}_{n_{1},n_{2}}$ yields%
\[
m(\widehat{\boldsymbol{\eta}}_{\beta})=m(\boldsymbol{\eta}_{n_{1},n_{2}%
})+\left.  \frac{\partial m(\boldsymbol{\eta})}{\partial\boldsymbol{\eta}^{T}%
}\right\vert _{\boldsymbol{\eta}=\boldsymbol{\eta}_{n_{1},n_{2}}%
}(\widehat{\boldsymbol{\eta}}_{\beta}-\boldsymbol{\eta}_{n_{1},n_{2}%
})+o\left(  \left\Vert \widehat{\boldsymbol{\eta}}_{\beta}-\boldsymbol{\eta
}_{n_{1},n_{2}}\right\Vert \right)  .
\]
From (\ref{4.4.b}), we have
\begin{eqnarray*}
 m(\widehat{\boldsymbol{\eta}}_{\beta})&=&\left(  \tfrac{n_{1}n_{2}}{n_{1}+n_{2}%
}\right)  ^{-1/2}\left.  \frac{\partial m(\boldsymbol{\eta})}{\partial
\boldsymbol{\eta}^{T}}\right\vert _{\boldsymbol{\eta}=\boldsymbol{\eta}_{0}%
}\boldsymbol{d}+\left.  \frac{\partial m(\boldsymbol{\eta})}{\partial
\boldsymbol{\eta}^{T}}\right\vert _{\boldsymbol{\eta}=\boldsymbol{\eta}%
_{n_{1},n_{2}}}(\widehat{\boldsymbol{\eta}}_{\beta}-\boldsymbol{\eta}%
_{n_{1},n_{2}})\\
&&+o\left(  \left\Vert \widehat{\boldsymbol{\eta}}_{\beta
}-\boldsymbol{\eta}_{n_{1},n_{2}}\right\Vert \right)  +o\left(  \left\Vert
\boldsymbol{\eta}_{n_{1},n_{2}}-\boldsymbol{\eta}_{0}\right\Vert \right)  .
\end{eqnarray*}
As $\sqrt{\frac{n_{1}n_{2}}{n_{1}+n_{2}}}\left(  o\left(  \left\Vert
\widehat{\boldsymbol{\eta}}_{\beta}-\boldsymbol{\eta}_{n_{1},n_{2}}\right\Vert
\right)  +o\left(  \left\Vert \boldsymbol{\eta}_{n_{1},n_{2}}-\boldsymbol{\eta
}_{0}\right\Vert \right)  \right)  =o_{p}\left(  1\right)  $ and
(\ref{eqTh0}), we have
\[
\sqrt{\tfrac{n_{1}n_{2}}{n_{1}+n_{2}}}m(\widehat{\boldsymbol{\eta}}_{\beta
})\underset{n_{1},n_{2}\mathcal{\rightarrow}\infty}{\overset{\mathcal{L}%
}{\longrightarrow}}\mathcal{N}(\left.  \tfrac{\partial m(\boldsymbol{\eta}%
)}{\partial\boldsymbol{\eta}^{T}}\right\vert _{\boldsymbol{\eta}%
=\boldsymbol{\eta}_{0}}\boldsymbol{d},\sigma_{\beta,m}^{2}(\boldsymbol{\eta
}_{0})),
\]
which is equivalent to
\[
\sqrt{\tfrac{n_{1}n_{2}}{n_{1}+n_{2}}}m(\widehat{\boldsymbol{\eta}}_{\beta
})\underset{n_{1},n_{2}\mathcal{\rightarrow}\infty}{\overset{\mathcal{L}%
}{\longrightarrow}}\mathcal{N}(\delta,\sigma_{\beta,m}^{2}(\boldsymbol{\eta
}_{0})).
\]
Since $\widehat{\boldsymbol{\eta}}_{\beta}\underset{n_{1},n_{2}\rightarrow
\infty}{\overset{\mathcal{P}}{\longrightarrow}}\boldsymbol{\eta}_{0}$,
$\frac{\sigma_{\beta,m}(\widehat{\boldsymbol{\eta}}_{\beta})}{\sigma_{\beta
,m}(\boldsymbol{\eta}_{0})}\underset{n_{1},n_{2}\rightarrow\infty
}{\overset{\mathcal{P}}{\longrightarrow}}1$, and applying Slutsky's theorem
$\sqrt{\frac{n_{1}n_{2}}{n_{1}+n_{2}}}\tfrac{m(\widehat{\boldsymbol{\eta}%
}_{\beta})}{\sigma_{\beta,m}(\widehat{\boldsymbol{\eta}}_{\beta})}$ and \\
$
\sqrt{\tfrac{n_{1}n_{2}}{n_{1}+n_{2}}}\tfrac{\sigma_{\beta,m}%
(\widehat{\boldsymbol{\eta}}_{\beta})}{\sigma_{\beta,m}(\boldsymbol{\eta}%
_{0})}\tfrac{m(\widehat{\boldsymbol{\eta}}_{\beta})}{\sigma_{\beta
,m}(\widehat{\boldsymbol{\eta}}_{\beta})}
$
have the same asymptotic distribution. But,
\[
\sqrt{\tfrac{n_{1}n_{2}}{n_{1}+n_{2}}}\tfrac{\sigma_{\beta,m}%
(\widehat{\boldsymbol{\eta}}_{\beta})}{\sigma_{\beta,m}(\boldsymbol{\eta}%
_{0})}\tfrac{m(\widehat{\boldsymbol{\eta}}_{\beta})}{\sigma_{\beta
,m}(\widehat{\boldsymbol{\eta}}_{\beta})}\underset{n_{1},n_{2}%
\mathcal{\rightarrow}\infty}{\overset{\mathcal{L}}{\longrightarrow}%
}\mathcal{N}(\delta,1).
\]
Finally, the desired result is obtained from
\begin{equation}
W_{n_{1},n_{2}}^{\beta}=\left(  \sqrt{\tfrac{n_{1}n_{2}}{n_{1}+n_{2}}}%
\tfrac{m(\widehat{\boldsymbol{\eta}}_{\beta})}{\sigma_{\beta,m}%
(\widehat{\boldsymbol{\eta}}_{\beta})}\right)  ^{2}.
\label{wald_stat}
\end{equation}
\hfill\rule{0.7em}{0.7em}
\end{proof}

\section{Robustness of the Wald-type test statistics\label{sec4}}

In this Section we are going to get the partial influence functions in the
sense of \cite{MR1945963} for the Wald-type test statistics,
$W_{n_{1},n_{2}}^{\beta}(\widehat{\boldsymbol{\eta}}_{\beta})$, considered in
this paper.

Let $\boldsymbol{\eta}_{0}=\left(  \mathcal{\mu}_1^*,\sigma_1^*%
,\mathcal{\mu}_2^*,\sigma_2^*\right)  ^{T}\in\Lambda_{0}$. We shall
denote
\[
\boldsymbol{T}_{\beta}\left(  F_{\mu_1^*,\sigma_1^*},F_{\mu_2^*%
,\sigma_2^*}\right)  =\left(  \boldsymbol{T}_{\beta}^{1}\left(  F_{\mu
_1^*,\sigma_1^*}\right)  ,\boldsymbol{T}_{\beta}^{2}\left(  F_{\mu
_2^*,\sigma_2^*}\right)  \right)  ^{T}=\boldsymbol{\eta}_{0}\in\Lambda_{0}%
\]
where $F_{\mu_i^*,\sigma_i^*}$ the distribution function of a log-normal
population with the true values of parameters $\mu_{i}$ and $\sigma_{i},$
$i=1,2$ and $\boldsymbol{T}_{\beta}^{i}\left(  F_{\mu_i^*,\sigma_i^*%
}\right)  $ is the MDPD functional in log-normal
population with the true values of parameters $\mu_{i}$ and $\sigma_{i},$
$i=1,2.$ We shall also denote
\[
F_{\mu_i^*,\sigma_i^*}^{\varepsilon_i}=(1-\varepsilon_i)F_{\mu_i^*%
,\sigma_i^*}+\varepsilon_i\delta_{x},\text{ }i=1,2
\]
with $\varepsilon_i\in\left(  0,1\right)  $ and $\delta_{x}$ is a point mass
distribution at $x.$

The functional associated with the Wald-type test statistics $W_{n_{1},n_{2}%
}^{\beta}(\widehat{\boldsymbol{\eta}}_{\beta})$, without the constant factor $\frac{n_1 n_2}{n_1+n_2}$, is given by%
\begin{equation}
w_{n_{1},n_{2}}^{\beta}\left(  \boldsymbol{T}_{\beta}\left(  F_{\mu
_1^*,\sigma_1^*},F_{\mu_2^*,\sigma_2^*}\right)  \right)  =\frac
{m^{2}\left(  \boldsymbol{T}_{\beta}\left(  F_{\mu_1^*,\sigma_1^*}%
,F_{\mu_2^*,\sigma_2^*}\right)  \right)  }{\sigma_{\beta,m}^{2}%
(\boldsymbol{T}_{\beta}(F_{\mu_1^*,\sigma_1^*},F_{\mu_2^*,\sigma_2^*%
}))}.
\label{wald_functional}
\end{equation}
In accordance with \cite{MR1945963} the partial influence functions are
given by
\[
\mathcal{IF}\left(  x, w_{n_{1},n_{2}}^{\beta},F_{\mu_1^*,\sigma_1^*%
}|F_{\mu_2^*,\sigma_2^*}\right)  =\left.  \frac{\partial w_{n_{1},n_{2}%
}^{\beta}\left(  F_{\mu_1^*,\sigma_1^*}^{\varepsilon_1},F_{\mu_2^*%
,\sigma_2^*}\right)  }{\partial\varepsilon_1}\right\vert _{\varepsilon_1=0}%
\]
and
\[
\mathcal{IF}\left(  x, w_{n_{1},n_{2}}^{\beta},F_{\mu_2^*,\sigma_2^*%
}|F_{\mu_1^*,\sigma_1^*}\right)  =\left.  \frac{\partial w_{n_{1},n_{2}%
}^{\beta}\left(  F_{\mu_1^*,\sigma_1^*},F_{\mu_2^*,\sigma_2^*%
}^{\varepsilon_2}\right)  }{\partial\varepsilon_2}\right\vert _{\varepsilon_2=0}.
\]
By analogy with the one-sample case,  $\mathcal{IF}\left(  x, w_{n_{1},n_{2}}^{\beta},F_{\mu_1^*,\sigma_1^*%
}|F_{\mu_2^*,\sigma_2^*}\right)$
measures, approximately, $n_1$ times the change on  $w_{n_{1},n_{2}}^{\beta}$ caused by an additional observation in $x$, 
when it is applied to a large combined sample of $(n_1, n_2)$ observations. Similarly, the second partial influence function is interpreted. 
Now, for the Wald-type test statistic, we have
\begin{align*}
&  \mathcal{IF}\left(  x,w_{n_{1},n_{2}}^{\beta},F_{\mu_1^*,\sigma_1^*%
}|F_{\mu_2^*,\sigma_2^*}\right)  \\
&  =\Bigg(  \left.  \frac{\partial m(\boldsymbol{\eta})}{\partial
\boldsymbol{\eta}^{T}}\right\vert _{\boldsymbol{\eta}=\boldsymbol{T}_{\beta
}\left(  F_{\mu_1^*,\sigma_1^*},F_{\mu_2^*,\sigma_2^*}\right)  }%
\\
& -\frac{m\left(  \boldsymbol{T}_{\beta}\left(  F_{\mu_1^*,\sigma_1^*%
},F_{\mu_2^*,\sigma_2^*}\right)  \right)  }{\sigma_{\beta,m}%
(\boldsymbol{T}_{\beta}(F_{\mu_1^*,\sigma_1^*},F_{\mu_2^*,\sigma_2^*%
}))}\left.  \frac{\partial\sigma_{\beta,m}^{2}(\boldsymbol{\eta})}%
{\partial\boldsymbol{\eta}^{T}}\right\vert _{\boldsymbol{\eta}=\boldsymbol{T}%
_{\beta}\left(  F_{\mu_1^*,\sigma_1^*},F_{\mu_2^*,\sigma_2^*}\right)
}\Bigg)  \\
&  \times2\frac{m\left(  \boldsymbol{T}_{\beta}\left(  F_{\mu_1^*%
,\sigma_1^*},F_{\mu_2^*,\sigma_2^*}\right)  \right)  }{\sigma_{\beta
,m}^{2}(\boldsymbol{T}_{\beta}(F_{\mu_1^*,\sigma_1^*},F_{\mu_2^*%
,\sigma_2^*}))}\mathcal{IF}\left(  x,\boldsymbol{T}_{\beta}^{1},F_{\mu
_1^*,\sigma_1^*}\right) \\ 
& =0.
\end{align*}
The last equality comes from $m\left(  \boldsymbol{T}_{\beta}\left(
F_{\mu_1^*,\sigma_1^*},F_{\mu_2^*,\sigma_2^*}\right)  \right)  =0$. In
a similar manner,%
\[
\mathcal{IF}\left(  x, w_{n_{1},n_{2}}^{\beta},F_{\mu_2^*,\sigma_2^*%
}|F_{\mu_1^*,\sigma_1^*}\right)  =0.
\]
Thus, the first order partial influence function is not useful in quantifying
the robustness of these tests. Now we are going to get the second order
partial influence functions for the Wald-type test statistics.  It is a simple
exercise to see that
\begin{align*}
&  \mathcal{IF}_{2}\left(  x, w_{n_{1},n_{2}}^{\beta},F_{\mu_1^*,\sigma
_1^*}|F_{\mu_2^*,\sigma_2^*}\right)  =\left.  \frac{\partial^{2}%
w_{n_{1},n_{2}}^{\beta}\left(  F_{\mu_1^*,\sigma_1^*}^{\varepsilon_1}%
,F_{\mu_2^*,\sigma_2^*}\right)  }{\partial\varepsilon_1^{2}}\right\vert
_{\varepsilon_1=0}\\
&  =\frac{2}{\sigma_{\beta,m}^{2}(\boldsymbol{T}_{\beta}(F_{\mu_1^*%
,\sigma_1^*},F_{\mu_2^*,\sigma_2^*}))}\left(  \left.  \frac{\partial
m(\boldsymbol{\eta})}{\partial\boldsymbol{\eta}^{T}}\right\vert
_{\boldsymbol{\eta}=\boldsymbol{T}_{\beta}\left(  F_{\mu_1^*,\sigma_1^*%
},F_{\mu_2^*,\sigma_2^*}\right)  }\mathcal{IF}\left(  x,\boldsymbol{T}%
_{\beta}^{1},F_{\mu_1^*,\sigma_1^*}\right)  \right)  ^{2}.
\end{align*}
Similarly,
\begin{align*}
&  \mathcal{IF}_{2}\left(  x,w_{n_{1},n_{2}}^{\beta},F_{\mu_2^*,\sigma
_2^*}|F_{\mu_1^*,\sigma_1^*}\right)  =\left.  \frac{\partial^{2}%
w_{n_{1},n_{2}}^{\beta}\left(  F_{\mu_1^*,\sigma_1^*},F_{\mu_2^*%
,\sigma_2^*}^{\varepsilon_2}\right)  }{\partial\varepsilon_2^{2}}\right\vert
_{\varepsilon_2=0}\\
&  =\frac{2}{\sigma_{\beta,m}^{2}(\boldsymbol{T}_{\beta}(F_{\mu_1^*%
,\sigma_1^*},F_{\mu_2^*,\sigma_2^*}))}\left(  \left.  \frac{\partial
m(\boldsymbol{\eta})}{\partial\boldsymbol{\eta}^{T}}\right\vert
_{\boldsymbol{\eta}=\boldsymbol{T}_{\beta}\left(  F_{\mu_1^*,\sigma_1^*%
},F_{\mu_2^*,\sigma_2^*}\right)  }\mathcal{IF}\left(  x,\boldsymbol{T}%
_{\beta}^{2},F_{\mu_2^*,\sigma_2^*}\right)  \right)  ^{2},
\end{align*}
and
\begin{align*}
&  \mathcal{IF}_{2}\left(  x, w_{n_{1},n_{2}}^{\beta},F_{\mu_1^*,\sigma
_1^*}, F_{\mu_2^*,\sigma_2^*}\right)  =\left.  \frac{\partial^{2}%
w_{n_{1},n_{2}}^{\beta}\left(  F_{\mu_1^*,\sigma_1^*}^{\varepsilon_1}%
,F^{\varepsilon_2}_{\mu_2^*,\sigma_2^*}\right)  }{\partial\varepsilon_1 \partial\varepsilon_2}\right\vert
_{\varepsilon_1=0, \varepsilon_2=0}\\
&  =\frac{2 \mathcal{IF}\left(  x,\boldsymbol{T}%
_{\beta}^{1},F_{\mu_1^*,\sigma_1^*}\right) \mathcal{IF}\left(  x,\boldsymbol{T}%
_{\beta}^{2},F_{\mu_2^*,\sigma_2^*}\right)}{\sigma_{\beta,m}^{2}(\boldsymbol{T}_{\beta}(F_{\mu_1^*%
,\sigma_1^*},F_{\mu_2^*,\sigma_2^*}))}\left(  \left.  \frac{\partial
m(\boldsymbol{\eta})}{\partial\boldsymbol{\eta}^{T}}\right\vert
_{\boldsymbol{\eta}=\boldsymbol{T}_{\beta}\left(  F_{\mu_1^*,\sigma_1^*%
},F_{\mu_2^*,\sigma_2^*}\right)  }\right)  ^{2}  .
\end{align*}

As the first order partial influence functions vanishes, using a Taylor series expansion it can be shown that the bias in the functional of the Wald-type statistic is approximated by
\begin{align}
& w_{n_{1},n_{2}}^{\beta}\left(  \boldsymbol{T}_{\beta}\left(  F^{\varepsilon_1}_{\mu
_1^*,\sigma_1^*},F^{\varepsilon_2}_{\mu_2^*,\sigma_2^*}\right)  \right) - w_{n_{1},n_{2}}^{\beta}\left(  \boldsymbol{T}_{\beta}\left(  F_{\mu
_1^*,\sigma_1^*},F_{\mu_2^*,\sigma_2^*}\right)  \right)\nonumber\\
& \approx \frac{\varepsilon_1^2}{2} \mathcal{IF}_{2}\left(  x, w_{n_{1}, n_{2}}^{\beta},F_{\mu_1^*,\sigma
_1^*}|F_{\mu_2^*,\sigma_2^*}\right) + \varepsilon_1 \varepsilon_2 \mathcal{IF}_{2}\left(  x, w_{n_{1},n_{2}}^{\beta},F_{\mu_1^*,\sigma
_1^*}, F_{\mu_2^*,\sigma_2^*}\right)\nonumber\\
& \ \ \ \ \  + \frac{\varepsilon_2^2}{2}  \mathcal{IF}_{2}\left(  x, w_{n_{1},n_{2}}^{\beta},F_{\mu_2^*,\sigma
_2^*}|F_{\mu_1^*,\sigma_1^*}\right) .
\label{bias}
\end{align}
Note that in Equation (\ref{wald_functional}), we have not included the factor $\frac{n_1 n_2}{n_1+n_2}$, so the approximate bias in the Wald-type statistic in is $\frac{n_1 n_2}{n_1+n_2}$ times of the above mentioned bias given in Equation (\ref{bias}).
In Section \ref{sec2}, for the log-normal distribution, we proved that $\mathcal{IF}\left(  x,\boldsymbol{T}%
_{\beta}^{i},F_{\mu_1^*,\sigma_1^*}\right)$, $ i=1,2$, the influence function of the power divergence estimator,  is bounded when $\beta>0$, while it is unbounded when $\beta=0$. So, all three second order partial influential functions used in calculating the bias term in Equation (\ref{bias}) are bounded for $\beta>0$. Hence, the Wald-type tests are robust against outliers. It will be further verified by the simulation results.

\begin{figure}
\centering 
\begin{tabular}
[c]{cc}%
\raisebox{-0cm}{\includegraphics[height=9.217cm,width=7.5cm]{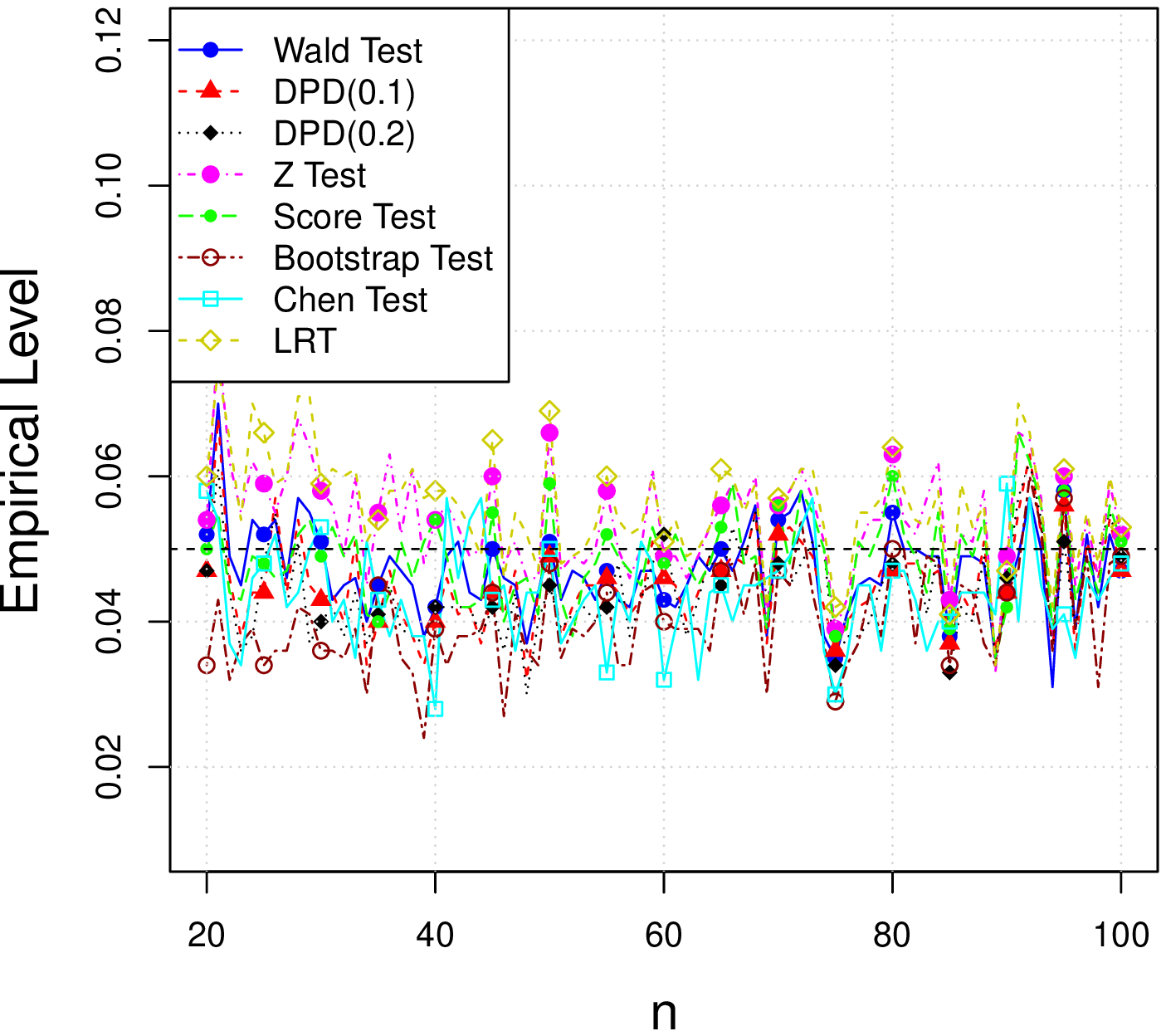}}
& 
{\includegraphics[height=9.217cm,width=7.5cm]{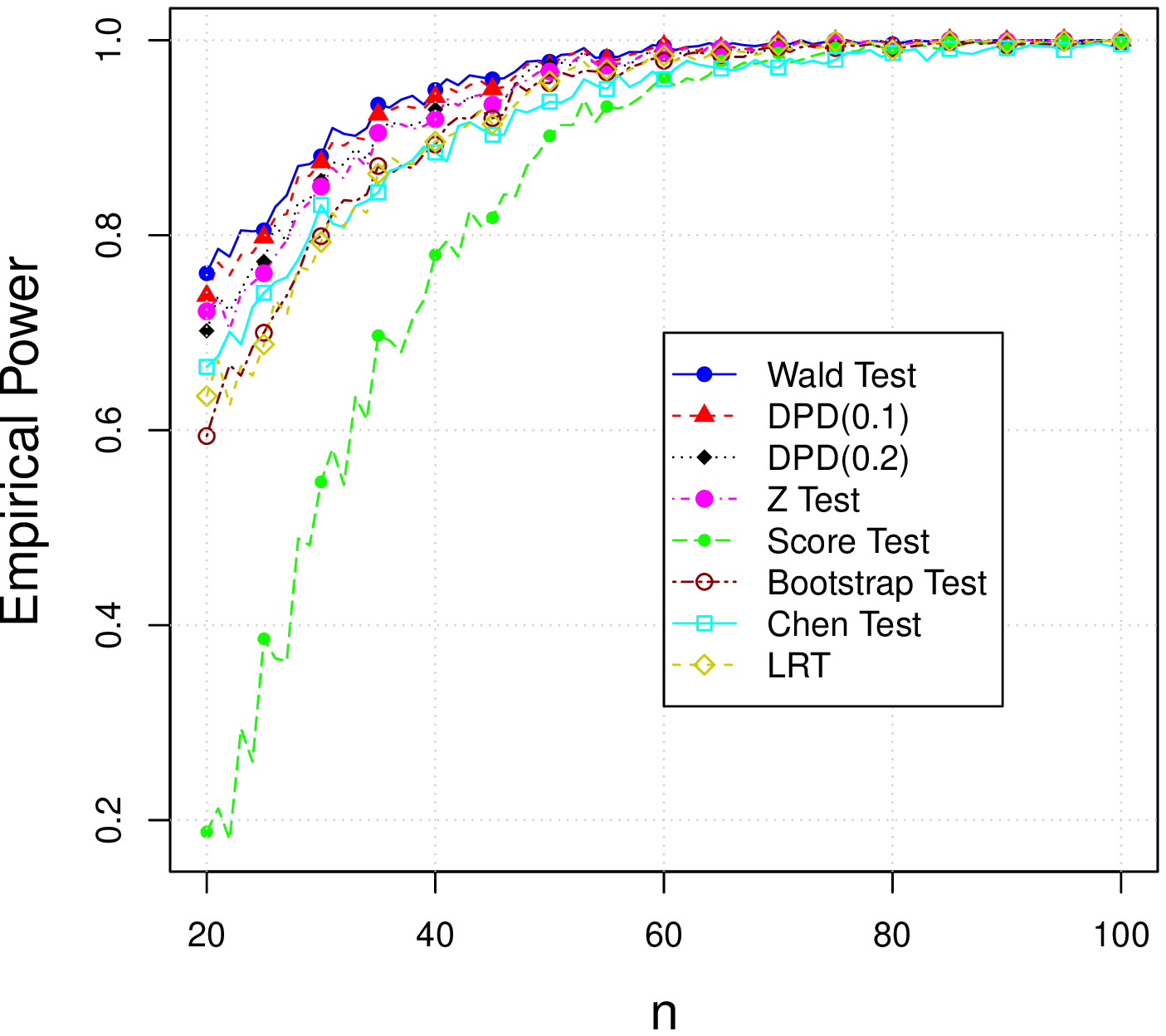}}\\
{\small (a)} & {\small (b)}\\%
\raisebox{-0cm}{\includegraphics[height=9.217cm,width=7.5cm]{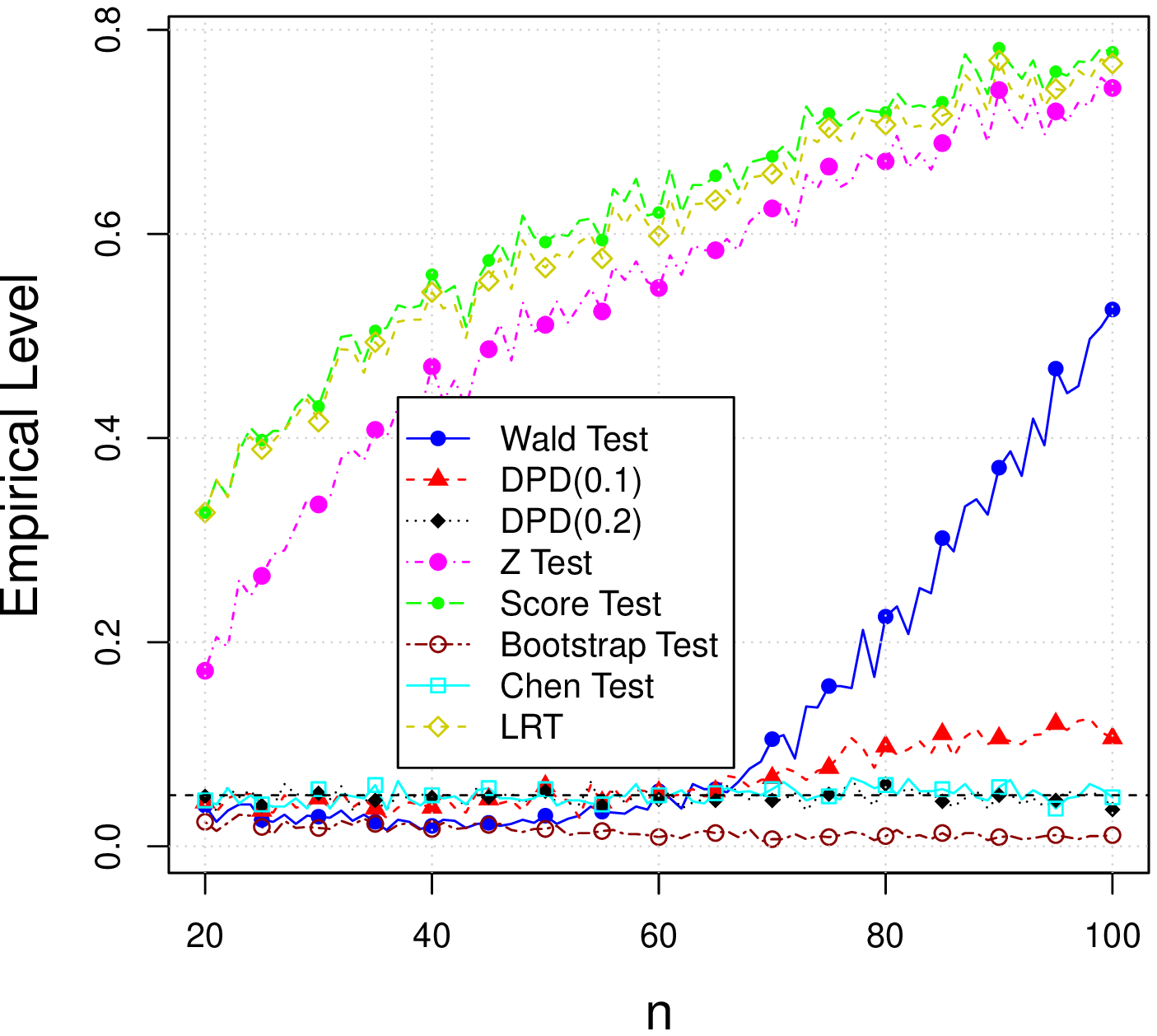}}
& 
\raisebox{-0cm}{\includegraphics[height=9.217cm,width=7.5cm]{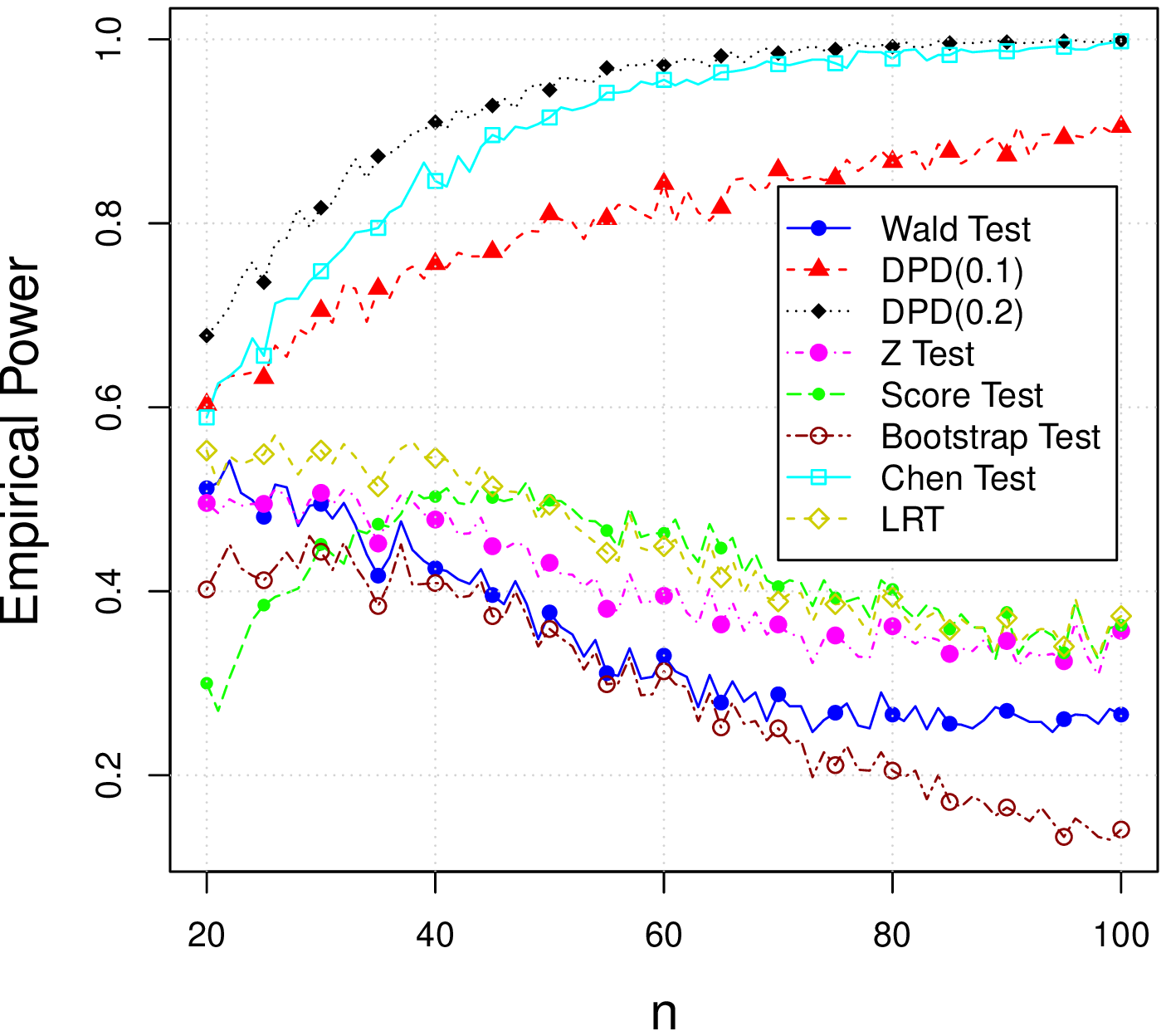}}\\
{\small (c)} & {\small (d)}%
\end{tabular}
\caption{Power and level under equal variances: (a) empirical level under pure data, (b) empirical power under pure
data, (c) empirical level under contaminated data and (d) empirical power
under contaminated data. }
\label{fig:raw}%
\end{figure}%

\section{Simulation study\label{sec6}}%

In this section we have studied the performance of the proposed Wald-type tests  using simulated data. We are interested in testing the null hypothesis given in (\ref{A}). We have divided the simulation setup in two parts -- one with  equal variances for the two populations, and  other with  unequal variances. For each case we have presented four plots -- level and power under pure data as well as under contaminated data. All tests are performed at the 5\% nominal level. At the end of the simulation results, we have presented a guideline to choose the tuning parameter $\beta$.

In the first case the values of the parameters are taken as $\mu_{1}=\mu_{2}=0$ and $\sigma_{1}^{2}=\sigma_{2}^{2}=0.4$.  
The samples from the two populations are chosen such that the ratio of the sample sizes (the first to the second) remains fixed at 1.5. This ratio is held constant as the overall sample sizes increase indefinitely.  Let $n$ denote the overall sample size of the two samples combined. The  observed levels are presented in Figure \ref{fig:raw}(a) for different values of $n$. Here the  observed level  is measured as the proportion of test statistics exceeding the corresponding critical value in 1000 replications. We have taken
 two  Wald-type test statistics based on MDPD estimator for $\beta= 0.1$ and $0.2$, denoted by $DPD(\beta)$. For comparison we have taken the classical Wald test,   $Z$ test and the bootstrap test proposed by \cite{zhou1997methods}, the score test proposed by \cite{gupta2006statistical}, Chen test proposed by \cite{MR1204372} and the likelihood ratio test (LRT). Note that 
 the classical Wald test statistic is a special case of the proposed family of the Wald-type test  statistics corresponding to $\beta=0$. In the bootstrap test we have taken 500 re-samples.  Figure \ref{fig:raw}(a) shows that the observed levels
 of all these tests are very close to the nominal level of 0.05 even at fairly small sample sizes. 
 
In order to get the powers of different tests, we simulate data from lognormal distributions where $\mu_1 = 0.8$ but the other parameters are identical to the level exploration case described in the previous paragraph. The null hypothesis is now an incorrect one and  observed powers (obtained in a similar manner as above) of the tests are presented in Figure \ref{fig:raw}(b). It is noticed that the classical Wald test performs best in terms of the power of the test, and the power of the score test is very small at least in  small sample sizes.  However, this discrepancy decreases
rapidly with the sample size, and by the time $n\geq 100$, all the observed
powers are practically equal to one. In any case for $\beta=0.1$ and 0.2 the Wald-type tests are almost as powerful as the classical Wald test and appear to be superior to all the other competitors considered.  

One point is worth noting here. The levels and powers in the graphs of Figures \ref{fig:raw}(a) and \ref{fig:raw}(b) have been calculated under $\sigma_1 = \sigma_2$, but without an assumption of the equality of these two parameters in the hypothesis. The only test for which we have made this assumption is the Chen test, which does not work otherwise. In that respect, the Chen test presented here in Figures \ref{fig:raw}(a) and \ref{fig:raw}(b) is not strictly comparable with the others in these graphs. But the point that is to be noted here is that the Wald-type tests with $\beta = 0.1$ and 0.2 are competitive or better than the Chen test in these cases, even though the Chen test enjoys more information and more structure in its construction. 

To evaluate the stability of the level and the power of the tests under
contamination, we repeated the above two sets of  simulations using contaminated data. We have replaced 5\% of the observations in the second population with those  from a log-normal distribution with $\mu=5$ and $\sigma^2=0.4$ while the population 1 model is the same as the one considered in Figure \ref{fig:raw}(a).  Figure \ref{fig:raw}(c) gives the empirical levels of the different tests in this case. 
It is interesting to note that only the tests except the Wald-type test (with $\beta = 0.2$) and the Chen test, perform very well in terms of the stability of the level. The bootstrap test is extremely conservative. The Wald-type test ($\beta = 0.1$) performs slightly worse than the best ones but is still relatively stable. All the others are hopelessly inadequate, and their levels eventually shoot up to 1 (the plot is given only for $n \leq 100)$. 

In the above set of simulations, apart from the Wald-type tests with moderately large positive values of $\beta$, the Chen test also appears to be reasonably stable. Yet at some other contaminations or larger sample sizes the Chen test could eventually fail to produce desired results. This is confirmed in the plot of Figure \ref{chen}, where the simulation is with a different contamination.  Beyond a sample size of 70 or so the Chen test exhibits an increasing trend, and fairly soon becomes irrelevant.

Finally, we have plotted the power functions of the tests for the contaminated data in Figure \ref{fig:raw}(d). The set up here is the same as the set up of Figure \ref{fig:raw}(c), expect that $\mu_1$ equals 0.8.   The powers of the robust Wald-type test with $\beta  = 0.2$ and Chen test show hardly any change under contamination. The $DPD(0.1)$ test is moderately stable, but has a larger loss in power compared to the previous two. All the other five tests are seen to break down completely. 

In the next part of the simulation we allow the variances to be unequal and choose  $\sigma_1^2=0.4$ and $\sigma_2^2=0.2$. The plots are given in Figure \ref{fig:raw1}. The values of mean parameters in Figures \ref{fig:raw1}(a) and \ref{fig:raw1}(c)  are $\mu_{1}=1.1$, and $\mu_{2}=1.2$. Note that the means of the two populations are equal to $\exp(1.3) = 3.67$; i.e. the null hypothesis is true. In Figures \ref{fig:raw1}(b) and \ref{fig:raw1}(d) the means are $\mu_{1}=1.6$, and $\mu_{2}=1.2$, so the null hypothesis is false. In case of Figures \ref{fig:raw1}(c) and \ref{fig:raw1}(d) we have replaced  5\% of the observations in the second population with those from a log-normal distribution with $\mu=5$ and $\sigma^2=0.2$. The Chen test is not applicable in this situation as it requires equal variances. In fact it completely fails to maintain the level and power even in pure data. Accordingly, we have left the Chen test out of the simulations in Figure \ref{fig:raw1}. Apart from this, the results depicted in Figure \ref{fig:raw1} are generally similar to those of Figure \ref{fig:raw} except for minor reversals here and there.  Anyhow, it is clear that the proposed Wald-type tests with moderate values of $\beta$ outperforms the Chen test even where it is applicable. 
As a whole, for  contaminated data, the proposed Wald-type test with a moderate value of $\beta$ does significantly better than  other tests in preserving its level as well as power. On the other hand, the Wald-type tests are very competitive to them under pure data.

\begin{figure}
{\begin{center}
\includegraphics[
height=6cm,
width=10cm
]{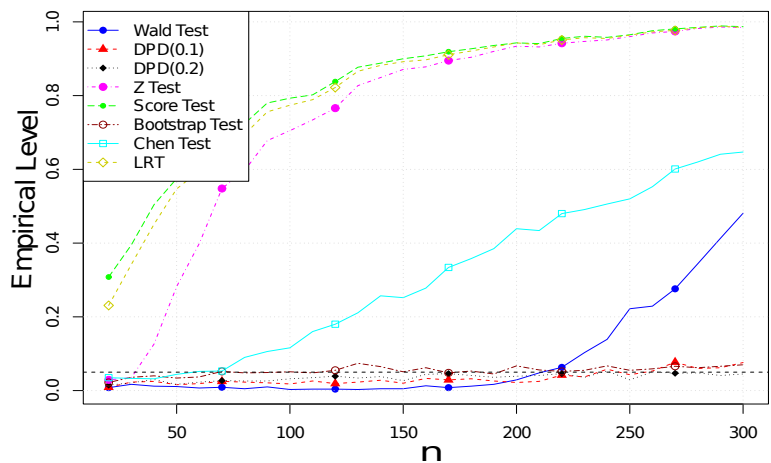}
\end{center}}
\caption{Plot of empirical levels of different tests when $\mu_{1}=\mu_{2}=0$ and $\sigma_{1}^{2}=\sigma_{2}^{2}=1$, and the second population is contaminated to the extent of 5\% with observations from a  log-normal distribution with $\mu=-10$ and $\sigma^2=1$.}
\label{chen}
\end{figure}

\begin{figure}
\centering 
\begin{tabular}
[c]{cc}%
\raisebox{-0cm}{\includegraphics[height=9.217cm,width=7.5cm]{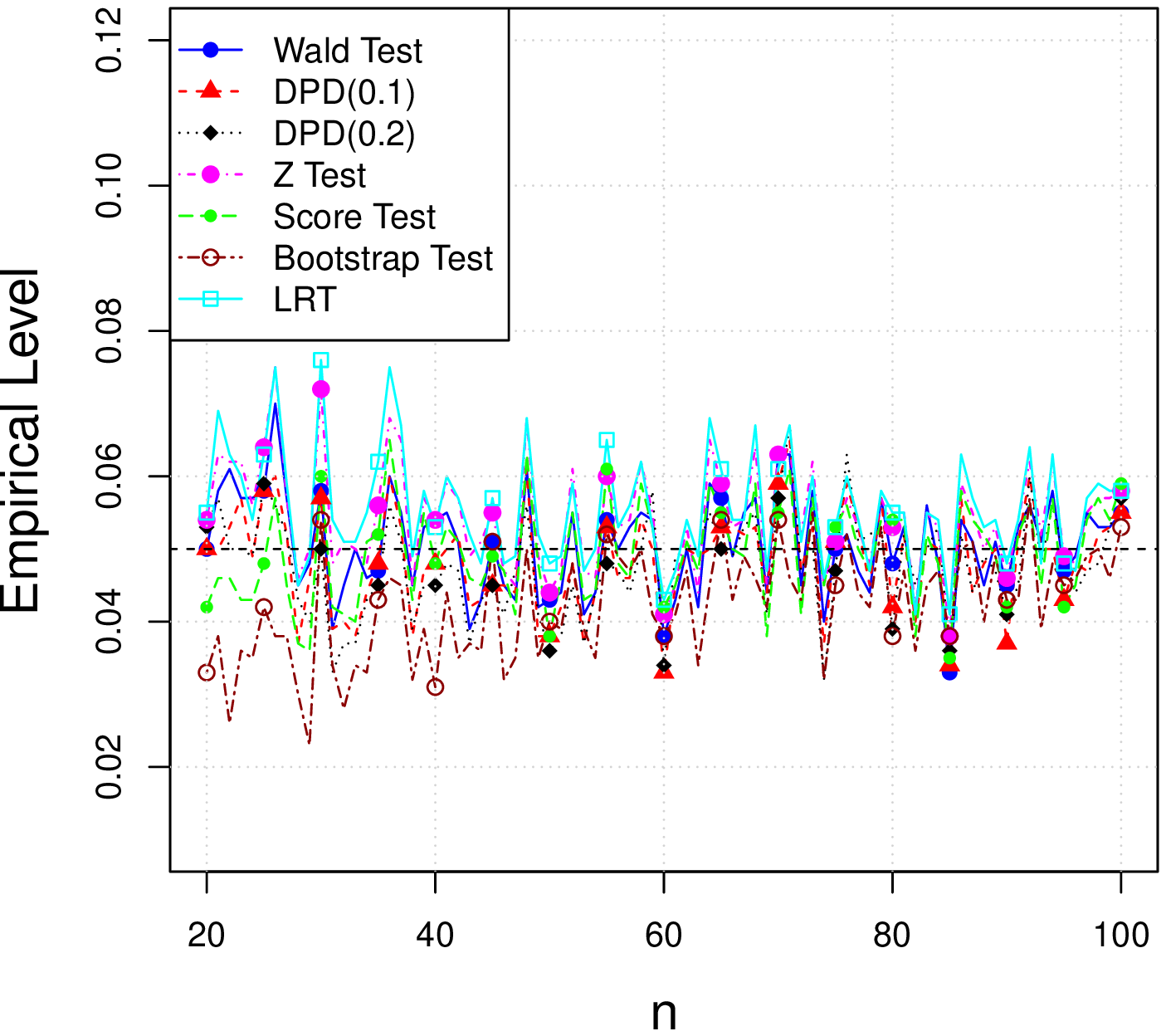}}
& 
{\includegraphics[height=9.217cm,width=7.5cm]{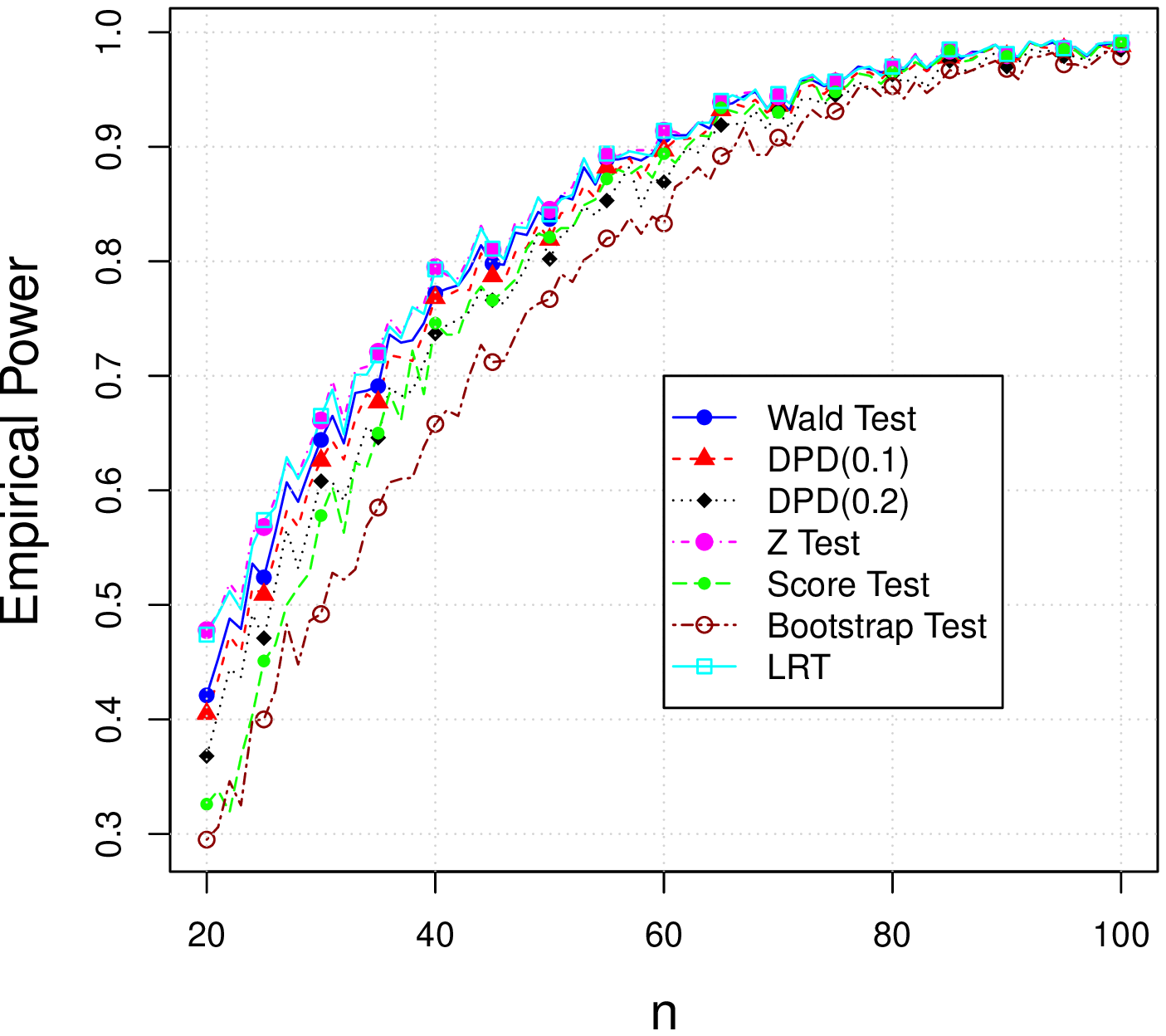}}\\
{\small (a)} & {\small (b)}\\%
\raisebox{-0cm}{\includegraphics[height=9.217cm,width=7.5cm]{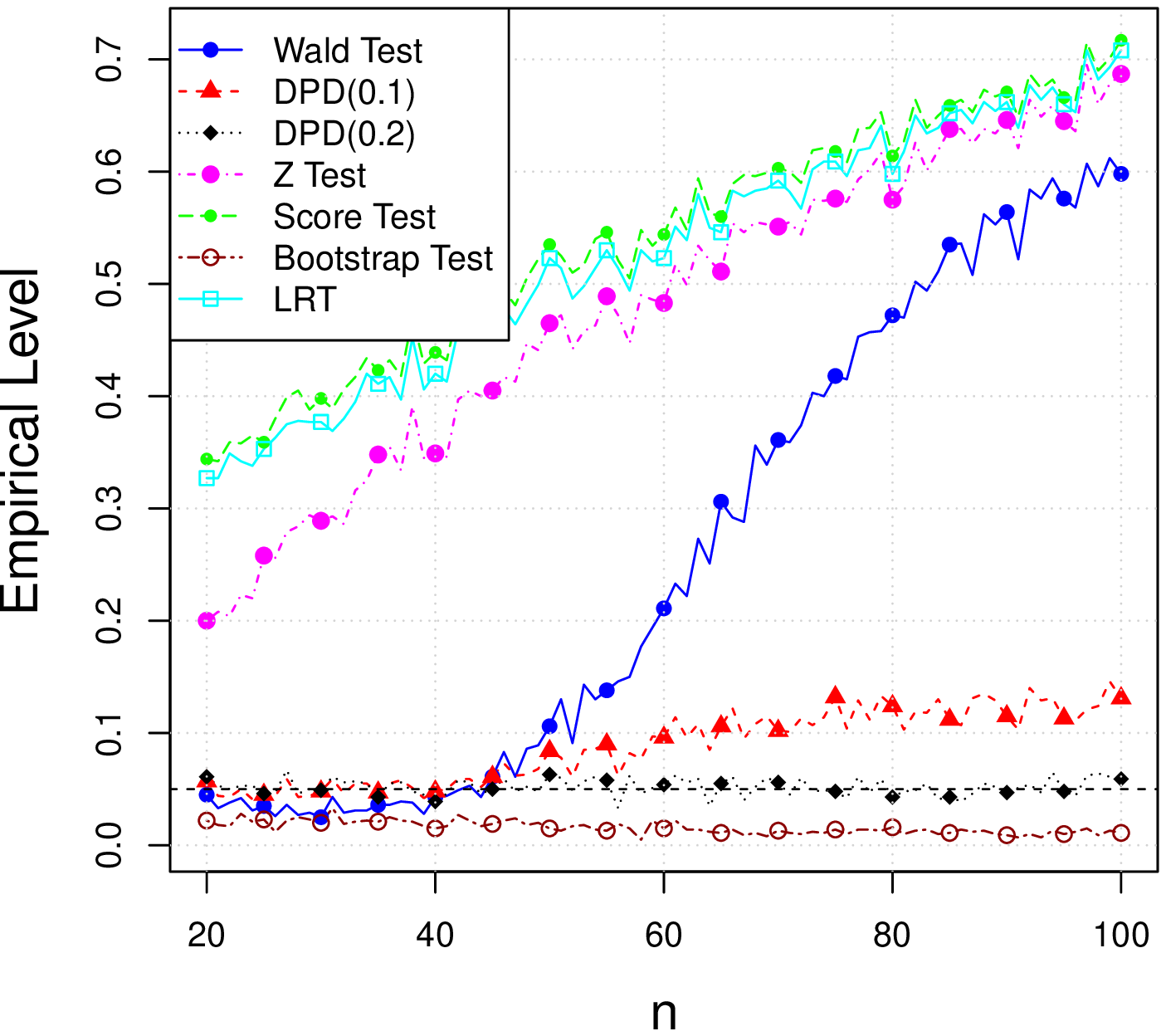}}
& 
\raisebox{-0cm}{\includegraphics[height=9.217cm,width=7.5cm]{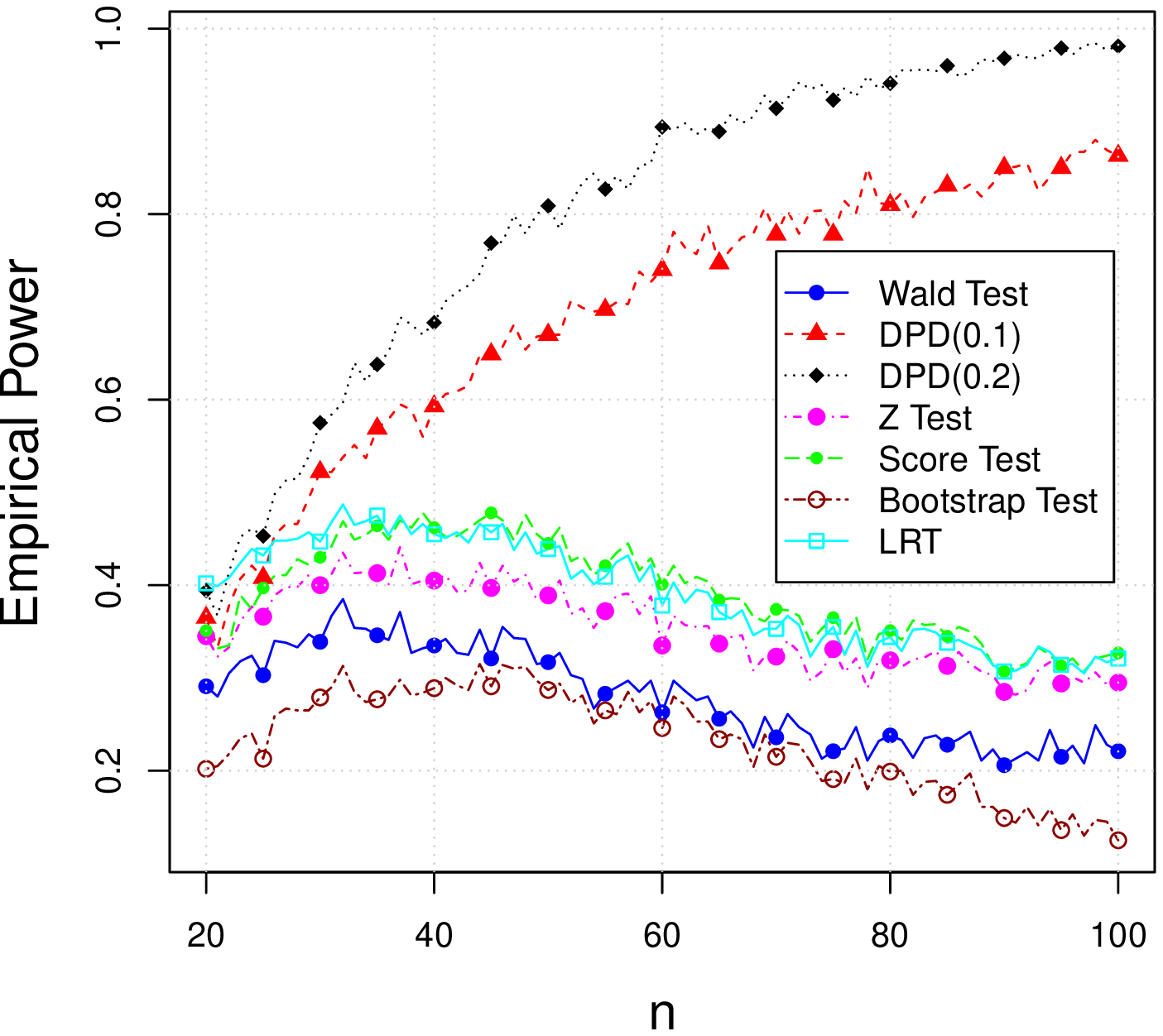}}\\
{\small (c)} & {\small (d)}%
\end{tabular}
\caption{Power and level under unequal variances: (a) empirical level under pure data, (b) empirical power under pure
data, (c) empirical level under contaminated data and (d) empirical power
under contaminated data. }
\label{fig:raw1}%
\end{figure}%

By construction, the Wald-type test statistic depends on the tuning parameter $\beta$. It is used to evaluate a robust estimator, and as a consequence, we obtain a robust test. The theoretical as well as simulation results show that the robustness properties of the Wald-type tests increase as $\beta$ increases. However, from Equation (\ref{as}), it can be shown that the efficiency of the MDPD estimators decrease as $\beta$ increases. Thus, for the large values of $\beta$, we observe a  loss of power in the Wald-type tests in the pure data, although they are robust in the contaminated data. Therefore, $\beta$ makes a trade-off between the efficiency and robustness of the MDPD estimator and the corresponding Wald-type test. In general, the robustness of the proposed tests correspond almost exactly to the robustness of the MDPD estimators, so we feel that the ``optimal'' choice of $\beta$ in the context of estimation would  work reasonably well in case of the hypothesis testing problem also. Now, the MLE, which is the MDPD estimator using  $\beta=0$, is the most efficient estimator. On the other hand, the density power divergence with $\beta=1$  corresponds to the $L_2$ distance that produces a strong robust estimator. So, we restrict the range of $\beta$ in the interval $[0,1]$. Values of $\beta \in [0.1, 0.2]$ are often reasonable choices, although tentative outliers and heavier contamination may require greater downweighting through a larger value of $\beta$.  Apart from the fixed choices of $\beta$, the data driven and adaptive choices could also be useful, as one can then tune the parameter to make the procedure more robust as required. One may follow the approach of \cite{Warwick} for this purpose, which minimizes an empirical measure of the mean square error of the estimator to determine the ``optimal'' tuning parameter. This requires the use of a robust pilot estimator of the parameter, and then the ``optimal'' estimator is obtained using an iterative method. \cite{Warwick} suggested the use of the MPDPD estimator corresponding to $\beta = 1$ as a pilot estimator.

\section{Numerical examples\label{sec5}}%
{\bf Example 1 (Air Quality data):} 
\cite{krishnamoorthy2003inferences}  analyzed a data set on air quality level from the {\it Data and Story Library} (\href{http://lib.stat.cmu.edu/DASL/}{http://lib.stat.cmu.edu/DASL/}). The data set measures the air quality from an oil refinery located at the
northeast of San Francisco. The refinery  conducted a series of 31 daily measurements of the carbon
monoxide levels arising from one of their stacks between April 16 and May 16, 1993.
It submitted these data to the Bay Area Air Quality Management District (BAAQMD)  as evidence for establishing a baseline for the air quality. BAAQMD personnel also made 9
independent measurements of the carbon monoxide concentration from the same stack over the period from September 11, 1990 to March 30, 1993. The data are given in Table \ref{tab:1}. We have checked the assumption of log-normality, and found that a  log-normal model adequately describes both sets of measurements but the normal model does not fit the data at any practical levels of significance. Under the log-normal model we want to test the null hypothesis that the means of two populations are equal. 

\begin{table}
 \begin{tabular}{ l  l }
\hline
 Refinery &  21,  30,  30,  34,  36,  37,  38,  40,  42,  43,  43,  45,  52,  55,  58, 58\\
 &   58, 59,  63,  63,  71,  75,  85,  86,  86,  99, 102, 102, 141, 153, 161\\
 BAAQMD &  4,  12.5,  15,  15,  20,  20,  20,  25, 170\\
\hline
\end{tabular}
\caption{Carbon monoxide measurements by the refinery and BAAQMD (in ppm). Observations are sorted in ascending order.}
\label{tab:1}
\end{table}

\begin{figure}
\centering
\begin{tabular}
[c]{cc}%
\raisebox{-0cm}{\includegraphics[height=8.8546cm,width=7.5cm]{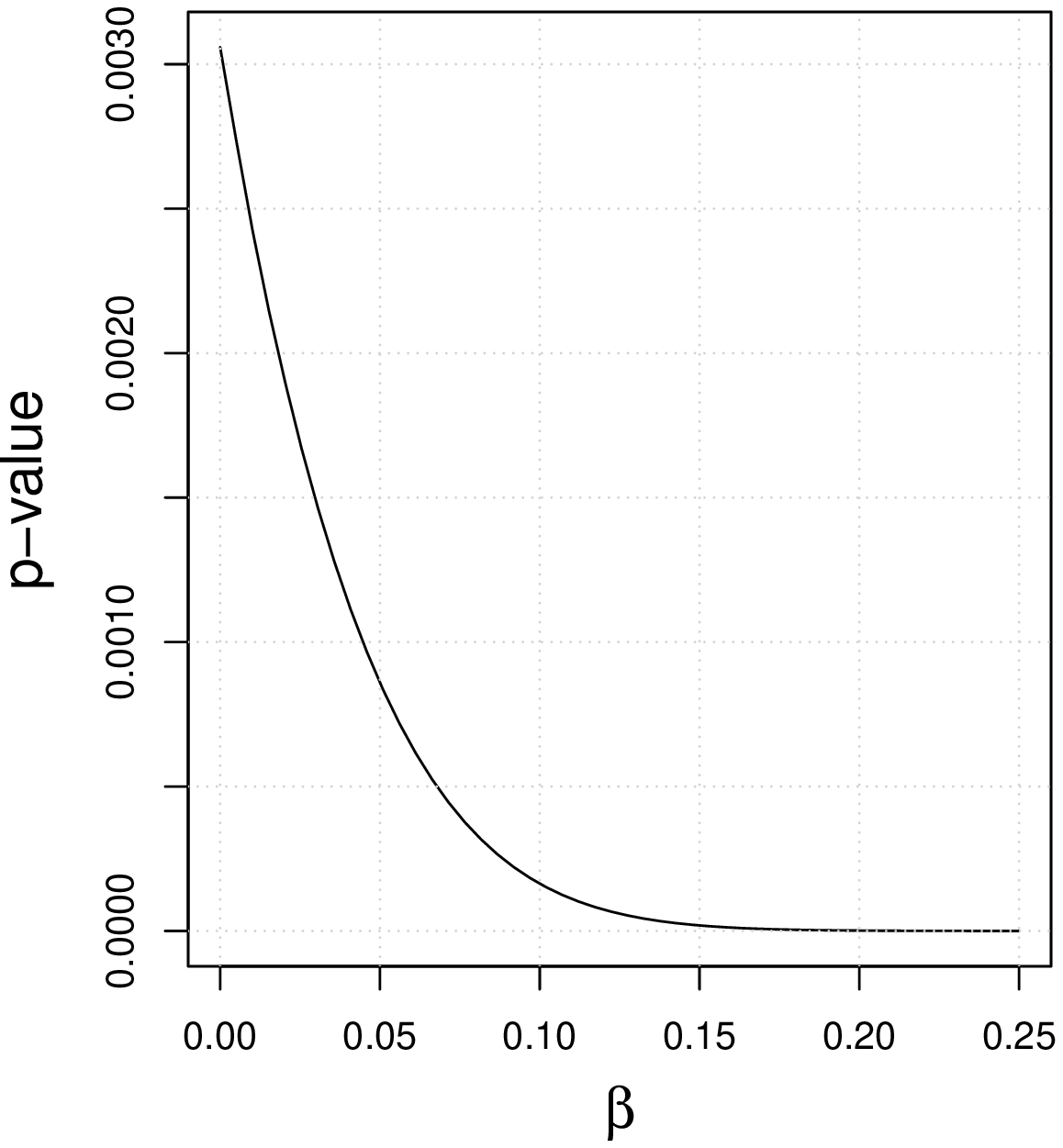}}
&
\raisebox{-0cm}{\includegraphics[height=8.8546cm,width=7.5cm]
{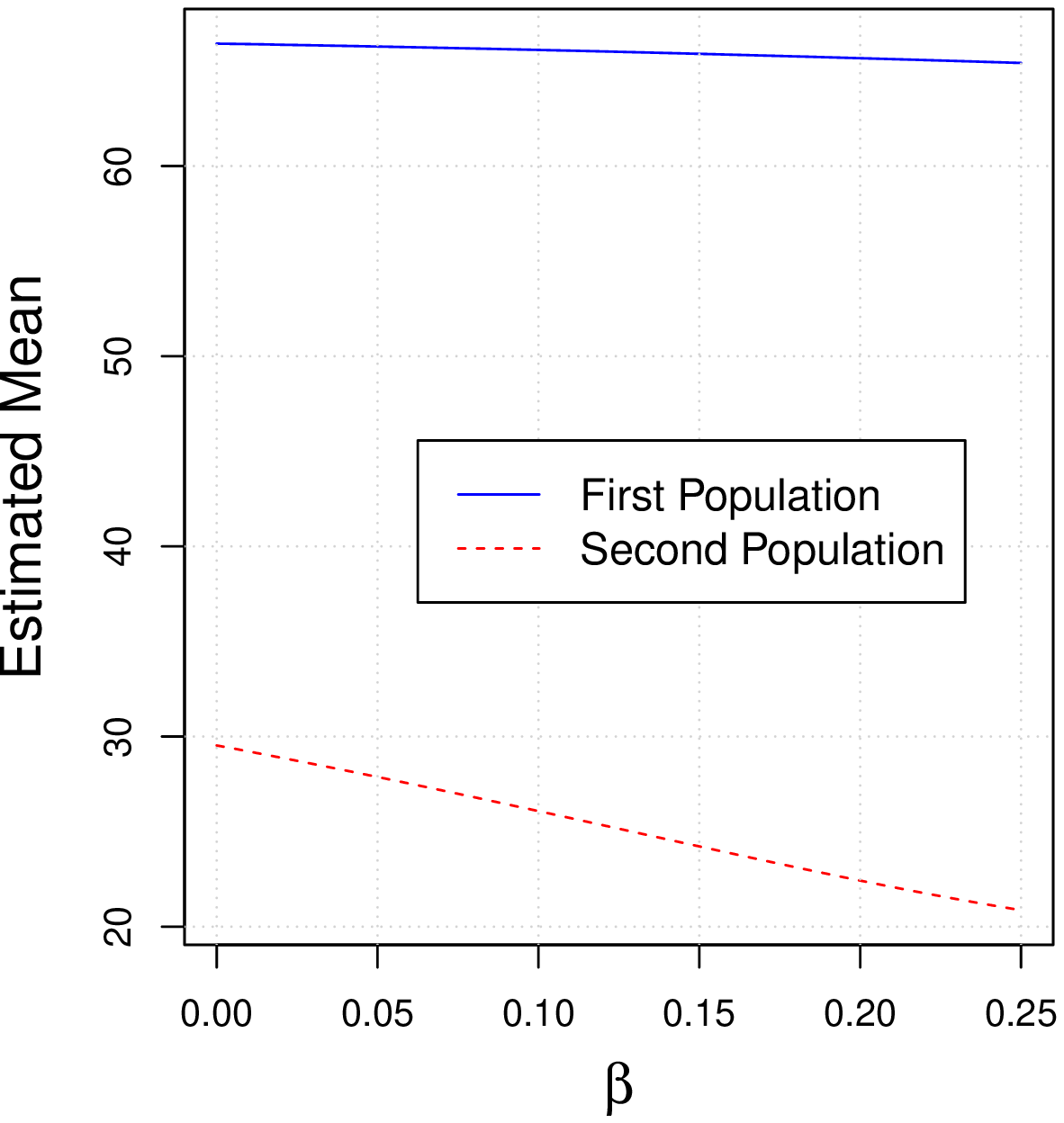}}\\
{\small (a)} & {\small (b)}\\
\raisebox{-0cm}{\includegraphics[height=8.8546cm,width=7.5cm]{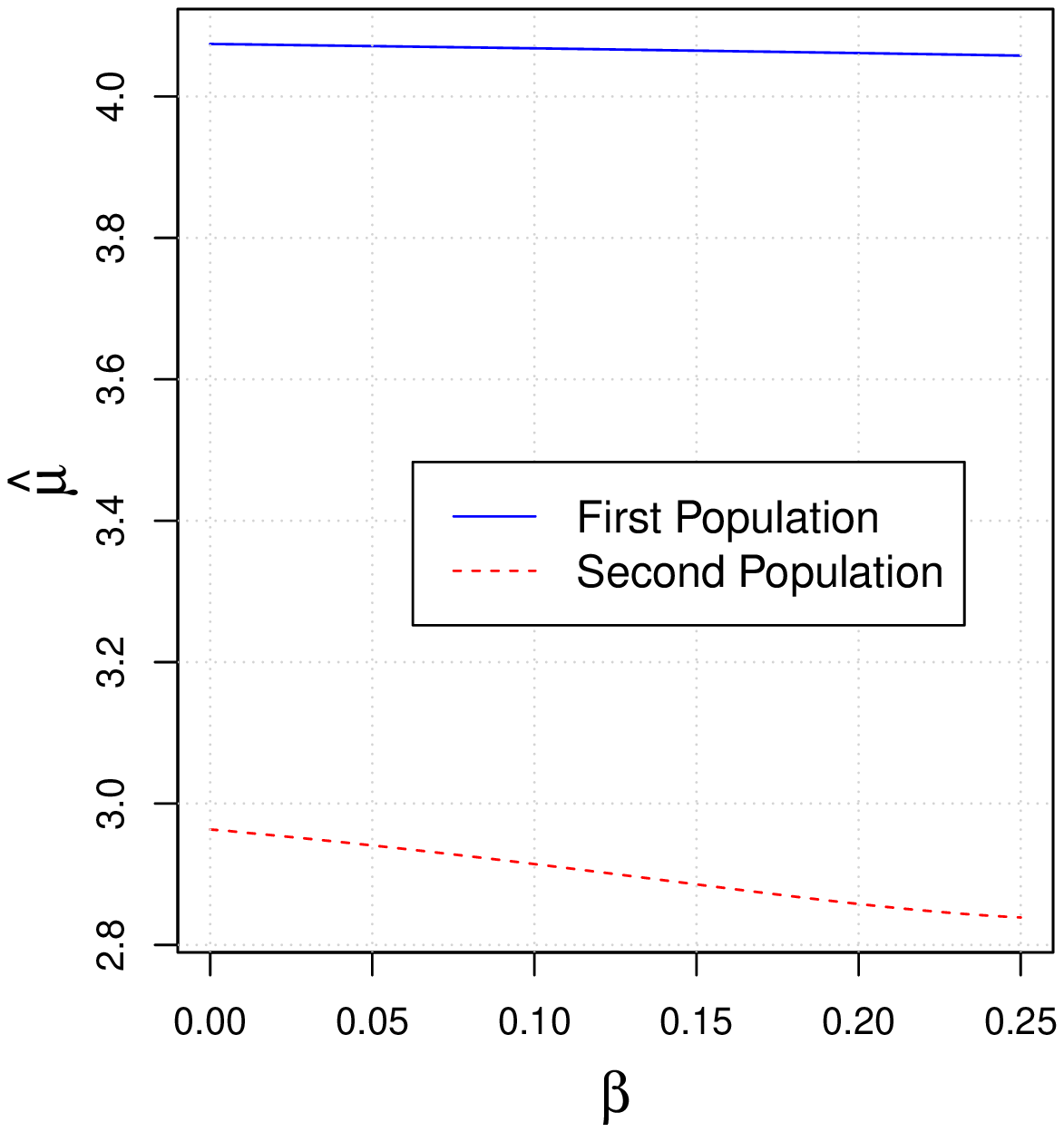}}
&
\raisebox{-0cm}{\includegraphics[height=8.8546cm,width=7.5cm]%
{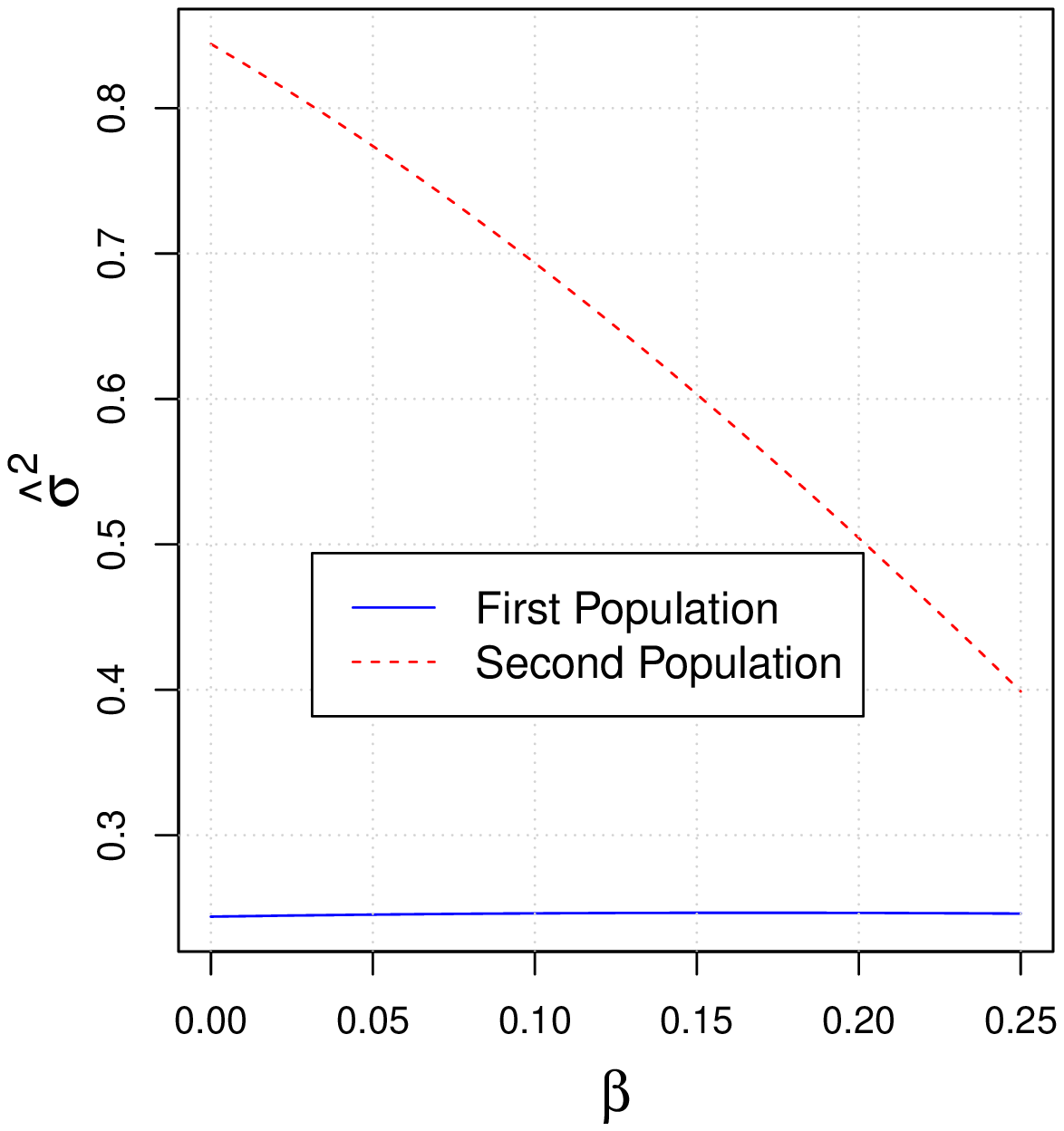}%
}
\\
{\small (c)} & {\small (d)}%
\end{tabular}
\caption{Plots of the (a) $p$-value, (b) estimated means, (c) estimated $\mu$ and (d) estimated $\sigma^2$ for the air quality data. }
\label{fig:air}
\end{figure}%

The values in Table \ref{tab:1} seem to indicate that there is a substantial separation in the two data sets.  In fact, the {\it Data and Story Library} mentioned that the refinery had an incentive to overestimate carbon monoxide emission to set up a baseline at a higher level.  However, there is a big outlier in BAAQMD's measurement at 170. For this reason the mean of the observations from BAAQMD are pushed  closer to the mean of the refinery's measurements. For the full data,  the Z-test, the score test, the bootstrap test and LRT fail to reject the null hypothesis  at the 5\% level. The $p$-values from these tests are 0.0654, 0.2090, 0.2973 and 0.1136, respectively. However, the $p$-values from the Wald-type tests are considerably lower (see Figure \ref{fig:air}(a)). Even DPD(0) generates a $p$-value which would lead to solid rejection in this case; for  $\beta\ge 0.15$ the $p$-values of the Wald-type tests are practically  equal to zero. It is interesting to note that, even if DPD(0) is generally considered to be a non-robust test, it gives a better result than the Z-test, the score test, the bootstrap test and the LRT. The latter tests first transform the observations to the log-scale, which makes a difference with DPD(0).

Figure \ref{fig:air}(b) gives the estimated means of the two populations of the MDPD estimators  for different values of $\beta$. It shows that the difference between the means of two populations increases as $\beta$ increases. For the first population (refinery measurements), the estimated mean remains more or less stable roughly around 66.  On the other hand, the second population (BAAQMD measurement) contains a large outlier, so its estimated mean decreases sharply as $\beta$ increases. As the population mean is a function of $\mu$ and $\sigma^2$, the corresponding estimators also decrease in the second population (see Figures \ref{fig:air}(c) and (d)). If we delete observation 170 from the second population,  the MLEs of $\mu$ and $\sigma^2$ become 2.69 and 0.33, respectively. Figures \ref{fig:air}(c) and (d) show that a large value of $\beta$ considerably eliminates the effect of the outlier. 

To show the effect of an outlier we have moved the outlying value from 1 to 500 (where the original value is 170). Figure \ref{fig_air_outlier} gives the $p$-values of the different tests under this scenario. We have taken three Wald-type tests with $\beta = 0, 0.1$ and 0.2, and compared the results with the LRT, Z-test, score test and bootstrap test. It is clear that Wald-type tests with $\beta =  0.1$ and 0.2  are robust against the outlying observation, and all other tests are outlier sensitive. Although in the original data the $p$-value of DPD(0) was very small, as the value of the outlier increases to 250 or more DPD(0) fails to reject the null hypothesis at 5\% level of significance. It shows that DPD(0) is also a non-robust test, but comparatively less outlier sensitive than the LRT, Z-test, the score test and the bootstrap test.

\begin{figure}
{\begin{center}
\includegraphics[height=8cm,width=12cm]{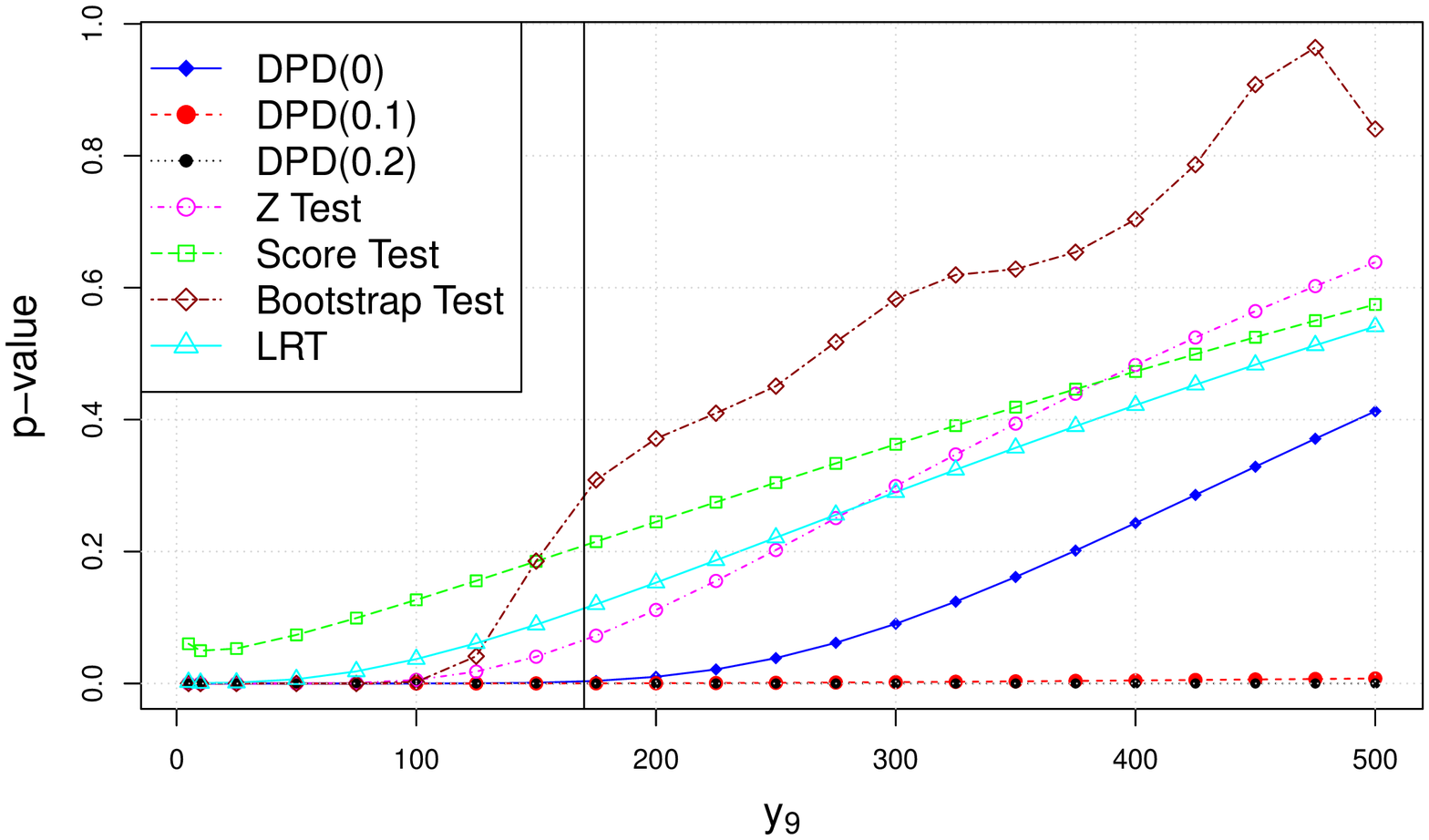}
\end{center}}
\caption{Plot of $p$-values for different values of the outlying observation for the air quality data. The original value of the observation was 170, and the intersection of the different curves with the vertical line at 170 gives the $p$-values for the different methods at the observed data. }
\label{fig_air_outlier}
\end{figure}

\bigskip
\noindent
{\bf Example 2 (Cloud data):} 
Our second example is also from the {\it Data and Story Library} 
and analyzed by
\cite{krishnamoorthy2003inferences}. Table \ref{tab:2} gives the data on the amount of rainfall (in acre-feet) from 52 clouds. Out of them 26 clouds were chosen at random and seeded with silver nitrate. Probability plots indicate that the normal model does not give a reasonable fit whereas the log-normal model fit the data sets very well. Here we are interested to test whether silver nitrate significantly changes the amount of rainfall.  The $p$-values of the Z-test, the score test, the bootstrap test and the LRT for testing the null of the equality of the means (against that they are different)  are 0.1200, 0.1581, 0.0814 and 0.1292, respectively. If we consider a test at 10\% level of significance, then all these $p$-values are on the borderline of rejection, although the bootstrap test is the only one which actually rejects it. On the other hand, Figure \ref{fig:cloud} shows that for large $\beta$ (say $\beta \geq 0.2$) the Wald-type tests produce results which are far from being significant.

\begin{table}
 \begin{tabular}{ l  l }
\hline
 Natural Rainfall &  1.0,    4.9,    4.9,   11.5,   17.3,   21.7,   24.4,   26.1,   26.3,  28.6,   29.0,   36.6,   41.1,   47.3,   \\
 & 68.5, 81.2,   87.0,   95.0, 147.8,  163.0,  244.3,  321.2,  345.5,  372.4,  830.1, 1202.6\\
 Artificial Rainfall &  4.1,    7.7,    17.5,    31.4,    32.7,    40.6,    92.4,    115.3,    118.3 119.0,    129.6,    198.6,    200.7,    \\
 & 242.5, 255.0,    274.7,    274.7,    302.8 334.1,    430.0,    489.1,    703.4,    978.0, 1656.0, \\
 & 1697.8, 2745.6 \\
\hline
\end{tabular}
\caption{The amount of rainfall (in acre-feet) from clouds without  seed and seeded with silver nitrate. Observations are sorted in ascending order.}
\label{tab:2}
\end{table}

\begin{figure}
\centering
\includegraphics[height=8cm,width=10cm]{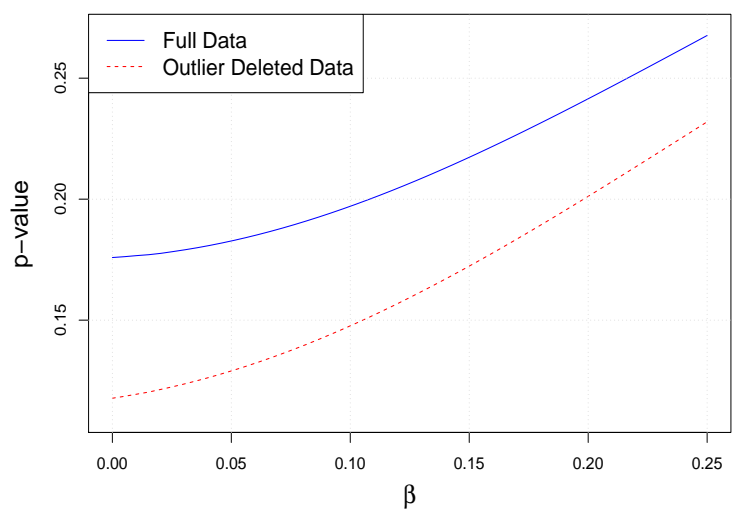}
\caption{ Plot of $p$-values of the Wald-type tests for different values of $\beta$ for the cloud data. In the outlier deleted data the largest observation of the first population is deleted.}
\label{fig:cloud}
\end{figure}

\begin{table}
\centering
\begin{tabular}{lccccccc}
  \hline
   & Z Test & Score Test & Bootstrap & LRT & DPD(0) & DPD(0.1) & DPD(0.2) \\ 
  \hline
 Full Data & 0.1200 & 0.1581 & 0.0814 & 0.1293 & 0.1759 & 0.1971 & 0.2415 \\ 
   Case I & 0.0425 & 0.0820 & 0.0594 & 0.0604 & 0.1177 & 0.1476 & 0.2012 \\ 
   Case II & 0.4782 & 0.4762 & 0.3532 & 0.4645 & 0.4518 & 0.4005 & 0.3830 \\ 
    Case III & 0.2368 & 0.2619 & 0.0850 & 0.2378 & 0.2457 & 0.2521 & 0.2821 \\ 
   \hline
\end{tabular}
\caption{The p-values of different tests for the full cloud data and outlier deleted data -- Case I: the largest value is deleted from the first population, Case II: three largest values are deleted from the second population and Case III: outliers deleted from both populations.}
\label{tab:cloud_outliers}
\end{table}


Table \ref{tab:2} shows that the data set contains a few outliers; in particular, the largest observation of the first population and three largest observations of the second population maybe regarded as outliers. 
If we delete the largest observation from the first population,  the $p$-values of the Z-test, the score test, the bootstrap test and the LRT  drop down to  to 0.0425, 0.0820, 0.0604 and 0.0535, respectively. So, all these tests clearly reject the null hypothesis at 10\% level of significance. On the other hand, the Wald-type tests continue to remain consistent with  the null hypothesis (see Figure \ref{fig:cloud}). Table \ref{tab:cloud_outliers} shows that in other two situations, when the outliers are deleted from the second population and the first population is kept intact, or when the outliers  are deleted from both the populations, all tests, except the bootstrap test, clearly fail to reject the null hypothesis. On the whole, the Wald-type tests with moderately large values of $\beta$ are highly stable and lead to the same conclusion in each of the four scenarios considered in Table \ref{tab:cloud_outliers}.

\section{Conclusion} \label{conclusion}
The log-normal distribution is widely used in modeling biomedical and medical data. The statistical analysis for comparing the means of two independent log-normal distributions is of real interest for the researchers. In the literature there are several tests which deal with this problem, but they are often non-robust in the presence of outliers and model misspecification. In this paper we have presented a robust test based on the minimum density power divergence estimator. The asymptotic distribution and the power functions for the fixed as well as the contiguous alternatives are presented. The robustness property of the test is also theoretically established. An extensive simulation work is conducted to verify the theoretical results, and the test is compared with some existing methods. This comparison demonstrates that our method outperforms the existing tests in the presence of outliers, while at the same time it gives very comparative results when the model is correctly specified. Two real-life examples are presented where, due to the presence of outliers, the existing tests give a misleading information, but our test produces a reliable result. 

\bigskip

\noindent
\textbf{Acknowledgement. } This research is partially supported by Grant MTM2015-67057-P from Ministerio de Economia y Competitividad (Spain). 

 \bibliography{LN_ReferenceNew}

\begin{thebibliography}{23}
\providecommand{\natexlab}[1]{#1}
\providecommand{\url}[1]{{#1}}
\providecommand{\urlprefix}{URL }
\expandafter\ifx\csname urlstyle\endcsname\relax
  \providecommand{\doi}[1]{DOI~\discretionary{}{}{}#1}\else
  \providecommand{\doi}{DOI~\discretionary{}{}{}\begingroup
  \urlstyle{rm}\Url}\fi
\providecommand{\eprint}[2][]{\url{#2}}

\bibitem[{Ahmed et~al(2001)Ahmed, Tomkins, and Volodin}]{ahmed2001test}
Ahmed S, Tomkins R, Volodin A (2001) Test of homogeneity of parallel samples
  from lognormal populations with unequal variances. J Statist Research
  35(2):25--33

\bibitem[{Basu et~al(1998)Basu, Harris, Hjort, and Jones}]{MR1665873}
Basu A, Harris IR, Hjort NL, Jones MC (1998) Robust and efficient estimation by
  minimising a density power divergence. Biometrika 85(3):549--559

\bibitem[{Basu et~al(2016)Basu, Mandal, Martin, and Pardo}]{MR3435166}
Basu A, Mandal A, Martin N, Pardo L (2016) Generalized {W}ald-type tests based
  on minimum density power divergence estimators. Statistics 50(1):1--26

\bibitem[{Chand(1950)}]{MR0038612}
Chand U (1950) Distributions related to comparison of two means and two
  regression coefficients. Ann Math Statistics 21:507--522

\bibitem[{Chen(1994)}]{MR1204372}
Chen H (1994) Comparison of lognormal population means. Proc Amer Math Soc
  121(3):915--924

\bibitem[{Chen and Loh(1992)}]{MR1186261}
Chen H, Loh WY (1992) Bounds on {ARE}s of tests following {B}ox-{C}ox
  transformations. Ann Statist 20(3):1485--1500

\bibitem[{Fraser(1957)}]{MR0083868}
Fraser DAS (1957) Nonparametric methods in statistics. John Wiley \& Sons,
  Inc., New York

\bibitem[{Ghosh et~al(2016)Ghosh, Mandal, Martin, and
  Pardo}]{ghosh2016influence}
Ghosh A, Mandal A, Martin N, Pardo L (2016) Influence analysis of robust
  wald-type tests. J Multivariate Anal 147:102--126

\bibitem[{Guo and Luh(2000)}]{guo2000testing}
Guo JH, Luh WM (2000) Testing methods for the one-way fixed effects anova
  models of log-normal samples. J Appl Statist 27(6):731--738

\bibitem[{Gupta and Li(2006)}]{gupta2006statistical}
Gupta RC, Li X (2006) Statistical inference for the common mean of two
  log-normal distributions and some applications in reliability. Computational
  statistics \& data analysis 50(11):3141--3164

\bibitem[{Hampel(1968)}]{MR2617979}
Hampel FR (1968) Contributions to the Theory of Robust Estimation. Ph.D.
  thesis, University of California, Berkeley

\bibitem[{Hampel(1974)}]{MR0362657}
Hampel FR (1974) The influence curve and its role in robust estimation. J Amer
  Statist Assoc 69:383--393

\bibitem[{Hampel et~al(1986)Hampel, Ronchetti, Rousseeuw, and
  Stahel}]{MR829458}
Hampel FR, Ronchetti EM, Rousseeuw PJ, Stahel WA (1986) Robust statistics: The
  approach based on influence functions. John Wiley \& Sons, Inc., New York

\bibitem[{Jafari and Abdollahnezhad(2015)}]{MR3334548}
Jafari AA, Abdollahnezhad K (2015) Inferences on the means of two log-normal
  distributions: a computational approach test. Comm Statist Simulation Comput
  44(7):1659--1672

\bibitem[{Krishnamoorthy and Mathew(2003)}]{krishnamoorthy2003inferences}
Krishnamoorthy K, Mathew T (2003) Inferences on the means of lognormal
  distributions using generalized p-values and generalized confidence
  intervals. J Statist Plann Inference 115(1):103--121

\bibitem[{Lehmann(1975)}]{MR0395032}
Lehmann EL (1975) Nonparametrics: statistical methods based on ranks.
  Holden-Day Inc., San Francisco, Calif.

\bibitem[{Li(2009)}]{MR2750888}
Li X (2009) A generalized {$p$}-value approach for comparing the means of
  several log-normal populations. Statist Probab Lett 79(11):1404--1408

\bibitem[{Pal et~al(2007)Pal, Lim, and Ling}]{MR2393699}
Pal N, Lim WK, Ling CH (2007) A computational approach to statistical
  inferences. J Appl Probab Stat 2(1):13--35

\bibitem[{Pires and Branco(2002)}]{MR1945963}
Pires AM, Branco JA (2002) Partial influence functions. J Multivariate Anal
  83(2):451--468

\bibitem[{Pratt(1964)}]{MR0166871}
Pratt JW (1964) Robustness of some procedures for the two-sample location
  problem. J Amer Statist Assoc 59:665--680

\bibitem[{van~der Vaart(1961)}]{MR0130756}
van~der Vaart HR (1961) On the robustness of {W}ilcoxon's two sample test. In:
  Quantitative methods in pharmacology ({S}ymposium, {L}eyden, 1960),
  Interscience, New York, pp 140--158

\bibitem[{{Warwick} and {Jones}(2005)}]{Warwick}
{Warwick} J, {Jones} M (2005) {Choosing a robustness tuning parameter.} {J Stat
  Comput Simulation} 75(7):581--588

\bibitem[{Zhou et~al(1997)Zhou, Gao, and Hui}]{zhou1997methods}
Zhou XH, Gao S, Hui SL (1997) Methods for comparing the means of two
  independent log-normal samples. Biometrics 53(3):1129--1135

\end{thebibliography}

\end{document}